%% file: tmp/sample-sigplan.tex
\documentclass[sigconf,screen,nonacm]{acmart}

\setcopyright{none}

\usepackage{enumitem}
\usepackage{multirow}
\usepackage{booktabs}
\usepackage{caption}
\usepackage{array}
\usepackage{makecell}
\usepackage{color}
\usepackage{xcolor}

\usepackage{subcaption}
\usepackage{algorithm}
\usepackage{algpseudocode}
\usepackage{threeparttable}
\usepackage{pifont}
\usepackage{pdfpages}
\usepackage{pdfcomment}
\usepackage{tikz}
\usepackage{amsmath}
\usepackage{pifont}
\usepackage{float}
\usepackage{bm}
\usepackage{listings} 
\usepackage{xcolor} 
\usepackage{stfloats}
\usepackage{adjustbox}
\usepackage{graphicx,booktabs}
\usepackage{booktabs,threeparttable}
\usepackage[table]{xcolor}
\usepackage{tabularx}
\usepackage{url}
\usepackage{hyperref}
\usepackage{placeins}
\newcolumntype{N}{>{\centering\arraybackslash}m{3.1em}}
\usepackage{tocloft}
\usepackage{etoc}
\usepackage{mathtools}
\usepackage{longtable}
\usepackage{graphicx}
\usepackage{hyperref}

\AtBeginDocument{%
  }

\input{macro}
\renewcommand\footnotetextcopyrightpermission[1]{} 

\begin{document}

\title{\method{}: Multimodal Corroboration of Latent Asset Signals for Financial Trading}


\author{Yanzheng Jin}
\affiliation{%
 \institution{National University of Singapore}
 \country{Singapore}
}

\author{Pengyang Shao}
\affiliation{%
 \institution{National University of Singapore}
 \country{Singapore}
}

\author{Xiaohao Liu}
\affiliation{%
 \institution{National University of Singapore}
 \country{Singapore}
}

\author{Xi Ai}
\affiliation{%
 \institution{National University of Singapore}
 \country{Singapore}
}

\author{Fei Shen}
\affiliation{%
 \institution{National University of Singapore}
 \country{Singapore}
}

\author{Kenji Kawaguchi}
\affiliation{%
 \institution{National University of Singapore}
 \country{Singapore}
}

\renewcommand{\shortauthors}{Yanzheng Jin et al.}

\begin{abstract}

Financial trading relies on extracting reliable signals from heterogeneous market modalities such as price series, breaking news, and investor sentiment. Existing multimodal methods primarily combine heterogeneous modalities to exploit complementarity, treating each modality as equally valuable while overlooking whether different modalities provide mutually supportive evidence for the same trading signal.
However, this task-conditioned and non-canceling support, termed \textit{multimodal corroboration}, is particularly valuable, especially for financial trading. 
Because individual financial views are noisy and weakly informative, support that persists across heterogeneous views may provide a more stable task-relevant signal than evidence appearing in only one view.
To exploit this property, we propose \method{} (multimodal \underline{\textbf{Co}}rroboration of \underline{\textbf{L}}atent \underline{\textbf{A}}sset \underline{\textbf{S}}ignals), a framework that operationalizes multimodal corroboration as a trainable task-conditioned representation for trading prediction.
The modality representations are organized into a per-instance matrix, where a softmax-based spectral objective strengthens its dominant shared component. Signed modality contributions then determine whether this component provides non-canceling support and construct the resulting corroborated signal.
A coupled robustness-aware consistency objective further preserves the resulting corroborated signal when a modality is corrupted or missing. 
Extensive experiments on stock and cryptocurrency datasets demonstrate the effectiveness of our proposed \method{}, yielding consistent improvements in both annualized return and Sharpe ratio over existing methods.

\end{abstract}



\keywords{Financial Trading, Multimodal Alignment, AI for Finance}


\renewcommand{\addcontentsline}[3]{}

\maketitle

\input{tex/1introduction}
\input{tex/2related_work}
\input{tex/3method}
\input{tables/baselines}
\input{tex/4experiment}

\input{tex/5conclusion}

\bibliographystyle{ACM-Reference-Format}
\balance
\bibliography{references}

\appendix
\input{tex/6appendix}

\end{document}

%% file: macro.tex
\newcommand{\best}[1]{\textbf{#1}}
\newcommand{\second}[1]{\underline{#1}}

\newcommand{\highlight}[1]{\textbf{#1}}

\newcommand{\ignore}[1]{}

\newcommand{\method}[1]{CoLAS}

\newcommand{\ARRhead}{\shortstack{\textbf{ARR\%$\uparrow$}}}
\newcommand{\SRhead}{\shortstack{\textbf{SR$\uparrow$}}}

\newcommand{\cmark}{\ding{51}}
\newcommand{\xmark}{\ding{55}}

%% file: tex/1introduction.tex
\section{Introduction}
\label{sec:introduction}
Data-driven financial trading has attracted increasing research interest, due to its ability to discover reliable trading signals from various market data~\cite{xu2018stock}. Typically, these methods take a window of historical market data as input and output a trading decision, such as a bullish or bearish signal~\cite{zhang2024multimodal,xiao2024tradingagents}. 
Among such data, heterogeneous modalities, \textit{e.g.}, price series, breaking news, and investor sentiment, characterize the market from diverse perspectives, providing reliable signals.

Existing work largely centers on \textit{how to exploit the available market modalities to extract trading signals}. Early approaches combine modalities at a coarse, input level: a representative line builds LLM-based trading agents that place heterogeneous inputs, \textit{e.g.}, news, prices, and charts, together in the language model's context and rely on its reasoning to reach a decision~\cite{zhang2024multimodal,xiao2024tradingagents,li2026time}. 
They distribute different information sources to specialized agents and aggregate their opinions into a final action.
More recent methods move beyond such prompt-level combination in two directions. The first develops dedicated multimodal architectures that combine modality representations through specialized modules to exploit complementary information, the information one modality carries but another lacks~\cite{koa2025reasoning}. The second turns to large-scale pretraining, building time-series foundation models that capture rich temporal patterns and attain strong performance from price and technical signals alone~\cite{shi2026kronos}. 
Despite their differences, these approaches mainly emphasize aggregating information to exploit complementary evidence, rather than explicitly modeling whether modality contributions reinforce or cancel a shared predictive direction. 

\begin{figure}[t]
    \centering
    \includegraphics[width=\columnwidth]{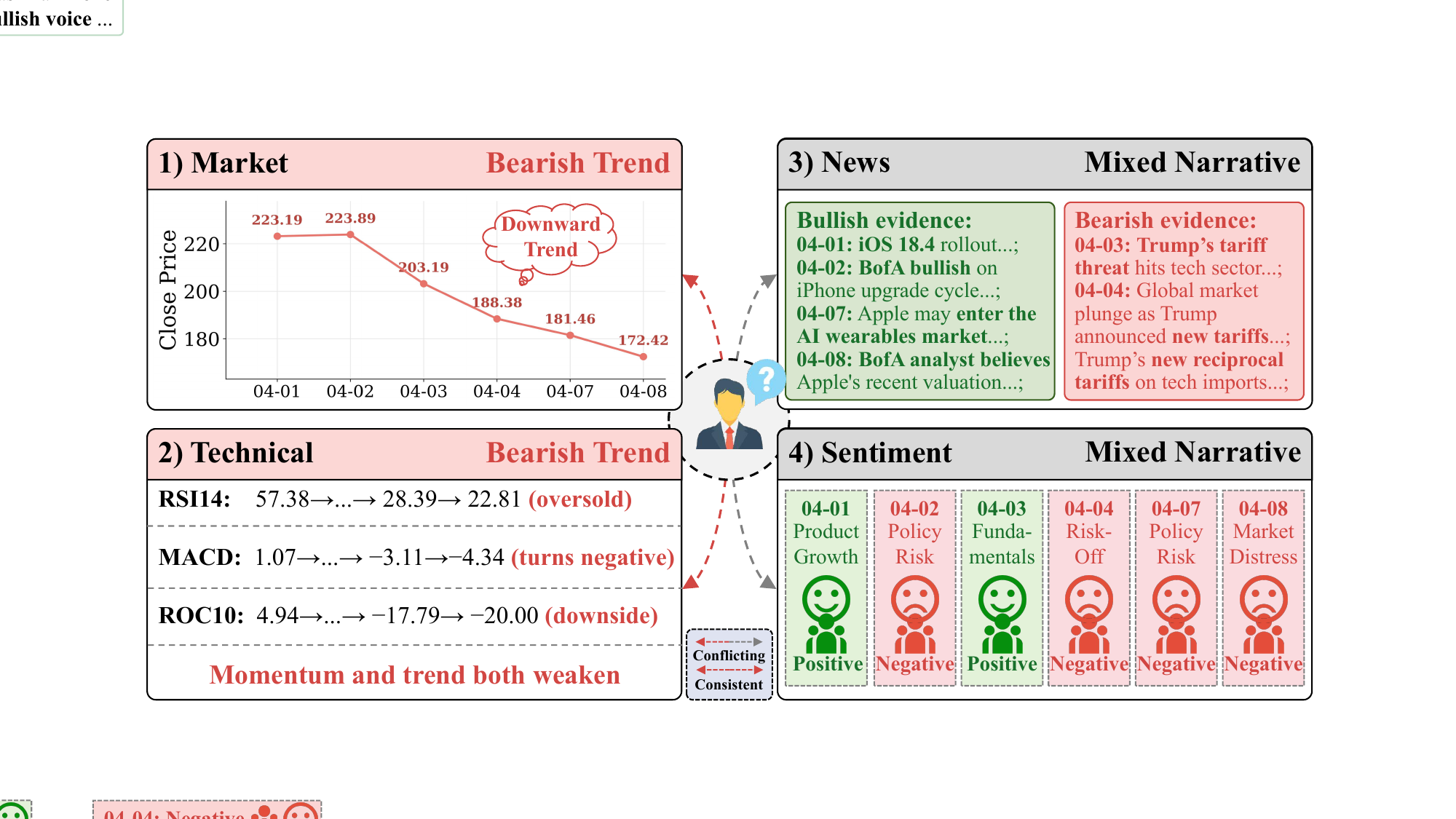}
    \vspace{-6mm}
    \caption{
    An illustration showing limitations of existing single- and multimodal financial trading methods.}
    \vspace{-6mm}
    \label{fig:intro}
\end{figure}

This distinction overlooks that multimodal information is not uniformly valuable for financial trading: \textit{what a modality contributes depends not on its mere presence, but on how it relates to the others}.
As illustrated in Figure~\ref{fig:intro}, market prices and technical indicators can jointly point to a bearish trend, whereas news and sentiment may instead provide bullish evidence. This divergence motivates a finer decomposition of multimodal information into three components.
The first, corroboration, refers to task-conditioned, non-canceling support provided by heterogeneous financial modalities for a common predictive direction.
The second, complementarity, is the information one modality contributes beyond the others. The third, conflict, arises when modalities provide opposing support for the predictive direction. Among the three, corroboration is an important and underexplored component for financial trading, for two reasons. (1) 
Corroboration across heterogeneous financial views is less susceptible to modality-specific noise than a signal appearing in only one view~\cite{mathur2022docfin,lee2025hierarchical}. This does not require statistical independence among the views. Market observations, transformed quantitative indicators, event-level news, and investor sentiment are generated or represented through different processes and exhibit distinct temporal characteristics and noise patterns. A direction retained across these heterogeneous views is therefore less likely to be driven solely by the idiosyncratic noise of one modality.
Furthermore, (2) such corroboration is especially valuable in finance~\cite{gu2020empirical}, where exploitable signals are weak and each individual view is inherently noisy~\cite{fama1970efficient,ctictan2015efficient}. Price series are dominated by short-term fluctuations, technical indicators may lag or amplify transient movements, news arrives irregularly, and investor sentiment is volatile and easily confounded~\cite{li2020multimodal,zheludev2014can,audrino2020impact}. Consequently, single-view predictions can be fragile, whereas a common task-relevant direction across views provides a more stable basis for financial trading.

This motivates us to explicitly model multimodal corroboration rather than entangling shared structure, whether supportive evidence or cross-modal conflict, into a single fused vector, which leaves potential corroboration implicit and difficult to recover~\cite{zhou2025triple,zhao2025resolving,hazarika2020misa}.
We further require reliable corroboration even when an individual modality becomes noisy or unavailable. These two principles, explicit corroboration modeling and reliability under modality degradation, underlie our proposed method.

To this end, we introduce \textbf{\method{}} (multimodal \underline{\textbf{Co}}rroboration of \underline{\textbf{L}}atent \underline{\textbf{A}}sset \underline{\textbf{S}}ignals) for financial trading, to capture the task-conditioned corroborated signal from four important modalities, including the market, technical, news, and sentiment streams. 
\method{} encodes each stream with a modality-specific encoder and projects it into a shared latent space, where the modalities of the same instance become directly comparable. 
A corroboration mining objective then operationalizes multimodal corroboration through two complementary components. 
First, singular value maximization concentrates the modality representations on a candidate shared spectral component, while signed modality contributions determine whether this component provides effective, non-canceling support. 
Second, the resulting corroborated signal is further regularized across instances to preserve sample discrimination, so the corroboration of different instances does not collapse together. 
In addition, a coupled robustness-aware consistency objective encourages the corroborated signal to remain stable when an individual modality is corrupted or missing. 
The two objectives are optimized jointly with the movement-prediction loss. 
The main contributions of \method{} are summarized as follows:
\begin{itemize}[left=0em]
\item We identify multimodal corroboration as task-conditioned, non-canceling support among heterogeneous financial modalities, as an important and underexplored factor for financial trading. We further explain why it provides a strong and reliable signal under the variety and noise of financial data.
\item We propose \method{}, a novel framework that operationalizes multimodal corroboration as a trainable representation mechanism. 
We strengthen the dominant shared spectral component for corroboration mining and regularize this signal to remain discriminative across instances. 
A coupled consistency objective further keeps it reliable when a modality is corrupted or missing.
\item We conduct extensive experiments on stock and cryptocurrency datasets, demonstrating the effectiveness of \method{} in delivering both higher returns and better risk-adjusted performance. Across the six primary datasets, \method{} improves annualized return over 25\% on average relative to the strongest baseline in each dataset. It also achieves the highest Sharpe ratio on all six datasets.
\end{itemize}

%% file: tex/2related_work.tex
\section{Related Work}
\label{sec:related_work}
\subsection{AI for Financial Trading}
The advent of Large Language Models (LLMs) has significantly advanced financial analysis by introducing semantic reasoning capabilities~\cite{wu2023bloomberggpt}, yet challenges in multimodal integration persist. 
Early studies~\cite{chen2023chatgpt} leveraged general-purpose LLMs (e.g., GPT-4) for zero-shot sentiment analysis and market prediction, employing prompt engineering techniques such as Chain-of-Thought (CoT)~\cite{wei2022chain} to extract trading signals from textual data~\cite{lopez2023can,wei2022chain,hansen2024can}. 
However, without domain-specific fine-tuning, these general LLMs often hallucinate on quantitative tasks and struggle to align textual sentiment with precise market movements. To bridge this domain gap, researchers developed Financial LLMs such as FinGPT~\cite{liu2023fingpt} and BloombergGPT~\cite{wu2023bloomberggpt}, which are fine-tuned on extensive financial corpora to enhance domain alignment. Building upon these foundations, the field has recently advanced towards agentic trading, where LLMs function as autonomous agents equipped with memory, reflection, and tool-use capabilities~\cite{yu2025finmem,zhang2024multimodal,xiao2024tradingagents,li2026time}. Despite enabling dynamic interaction with market environments, most existing agents remain predominantly text-centric. They tend to process numerical and textual modalities in isolation, a limitation that the proposed framework \method{} explicitly addresses.

\subsection{Multimodal Alignment}
Multimodal alignment embeds heterogeneous modalities into a shared space where representations of the same instance are close. Prior work is bimodal, CLIP~\cite{radford2021learning} aligns vision and language through pairwise contrastive learning, pulling matched pairs together while pushing mismatched ones apart. This paradigm has been extended to audio-text, point-text, and video-text pairs~\cite{elizalde2023clap,zhang2022pointclip,luo2022clip4clip}, and then to unified foundation models that fold in additional modalities such as audio~\cite{ruan2023accommodating,guzhov2022audioclip} and subtitles~\cite{luo2020univl,li2021value}. Besides these innovations, anchor-based methods designate one modality as a reference to align several modalities at once. Particularly, ImageBind~\cite{girdhar2023imagebind} binds all modalities to vision, while other models anchor on language or point clouds~\cite{zhu2024languagebind,guo2023point}. Framing every modality through one fixed reference, however, constrains how richly the modalities can interact. 
In contrast, another category aligns all modalities simultaneously rather than pairwise. 
A recent method~\cite{cicchetti2025gramian} proposes aligning multiple modalities simultaneously from a geometrical perspective. 
However, these methods are primarily developed for settings in which the main objective is semantic alignment across modalities. Financial trading additionally requires handling noisy and potentially opposing evidence that varies across market states, and an additional question is how to distinguish supportive from opposing modality contributions for a given prediction. 
Building upon this intuition, we propose \method{}, which combines multimodal corroboration mining with a robust prediction layer to construct a task-conditioned representation that combines dominant cross-view structure with signed modality support for financial trading.

%% file: tex/3method.tex
\input{tables/multimodal_data}
\section{Preliminary}
\label{sec:preliminary}
\noindent
\highlight{Problem formulation.} We study the task of predicting an asset's next-day directional signal from a historical window of $T$ trading days. 
Each instance contains four temporally aligned financial modalities, namely market, technical, news, and sentiment, as summarized in Table~\ref{tab:data_modalities}. These modalities may share upstream information but differ in their information content, construction processes, and noise characteristics (see details in Appendix~\ref{app:datasets}).
For each trading day
$t$ and modality $m\in\mathcal{M}$, the model input is defined as $\mathbf{X}_t^m
=
\left\{
\mathbf{x}_{t-T}^m,\ldots,\mathbf{x}_{t-1}^m
\right\},$ which contains only observations from the $T$ trading days preceding
day $t$ and excludes all observations from day $t$. Given the multimodal history
$\{\mathbf{X}_t^m\}_{m\in\mathcal{M}}$, the model predicts the
subsequent close-to-close price direction: $y_t
=
\mathbb{I}
\left[
p_{t+1}^c\geq p_t^c
\right],$
where $p_t^c$ denotes the adjusted closing price on trading day $t$.
Here, $y_t=1$ denotes a bullish signal from day $t$
to day $t+1$, whereas $y_t=0$ denotes a bearish signal.

\noindent
\highlight{Multimodal alignment.} We consider an instance with $k=|\mathcal{M}|$ modalities, whose normalized representations are stacked column-wise into a matrix $\mathbf{V}\in\mathbb{R}^{d_c\times k}$. Multimodal alignment seeks representations that are mutually consistent across modalities. The pairwise contrastive learning paradigm extends to $k>2$ by decomposing the objective over modality pairs. 
For example, the training on ${m_1,m_2,m_3}$ can be decomposed to the training sequences of $\mathcal{L}_{m_1,m_2}$, $\mathcal{L}_{m_1,m_3}$, and $\mathcal{L}_{m_2,m_3}$~\cite{girdhar2023imagebind,chen2023vast}.
Aligning all modalities jointly, rather than pair by pair, can instead be expressed through the Gram matrix $\mathbf{G}=\mathbf{V}^\top\mathbf{V}\in\mathbb{R}^{k\times k}$~\cite{cicchetti2025gramian,liu2026principled}.
The gains from simultaneous alignment motivate us to examine the spectral structure of $\mathbf{V}$ more closely.

\noindent
\highlight{Singular Value Decomposition (SVD).} For a matrix $\mathbf{X}$, the singular value decomposition is $\mathbf{X}=\mathbf{P}\mathbf{\Sigma}\mathbf{R}^\top$, where $\mathbf{P}$ and $\mathbf{R}$ are orthogonal and $\mathbf{\Sigma}$ carries the non-negative singular values $\lambda_1\ge\lambda_2\ge\cdots\ge 0$ on its diagonal. These singular values are the square roots of the eigenvalues of $\mathbf{X}^\top\mathbf{X}$, so the largest one satisfies $\lambda_1^2=\sigma_1$, the dominant eigenvalue, and its left singular vector defines the dominant direction. Applied to the modality matrix V, the leading singular direction identifies the axis along which the learned modality representations are most geometrically concentrated. This spectral concentration is used as a proxy for dominant shared cross-view structure, but is not by itself sufficient to establish effective corroboration. The principal notation is summarized in Appendix~\ref{app:notations}.

\begin{figure*}[t]
    \centering
    \includegraphics[width=0.95\textwidth]{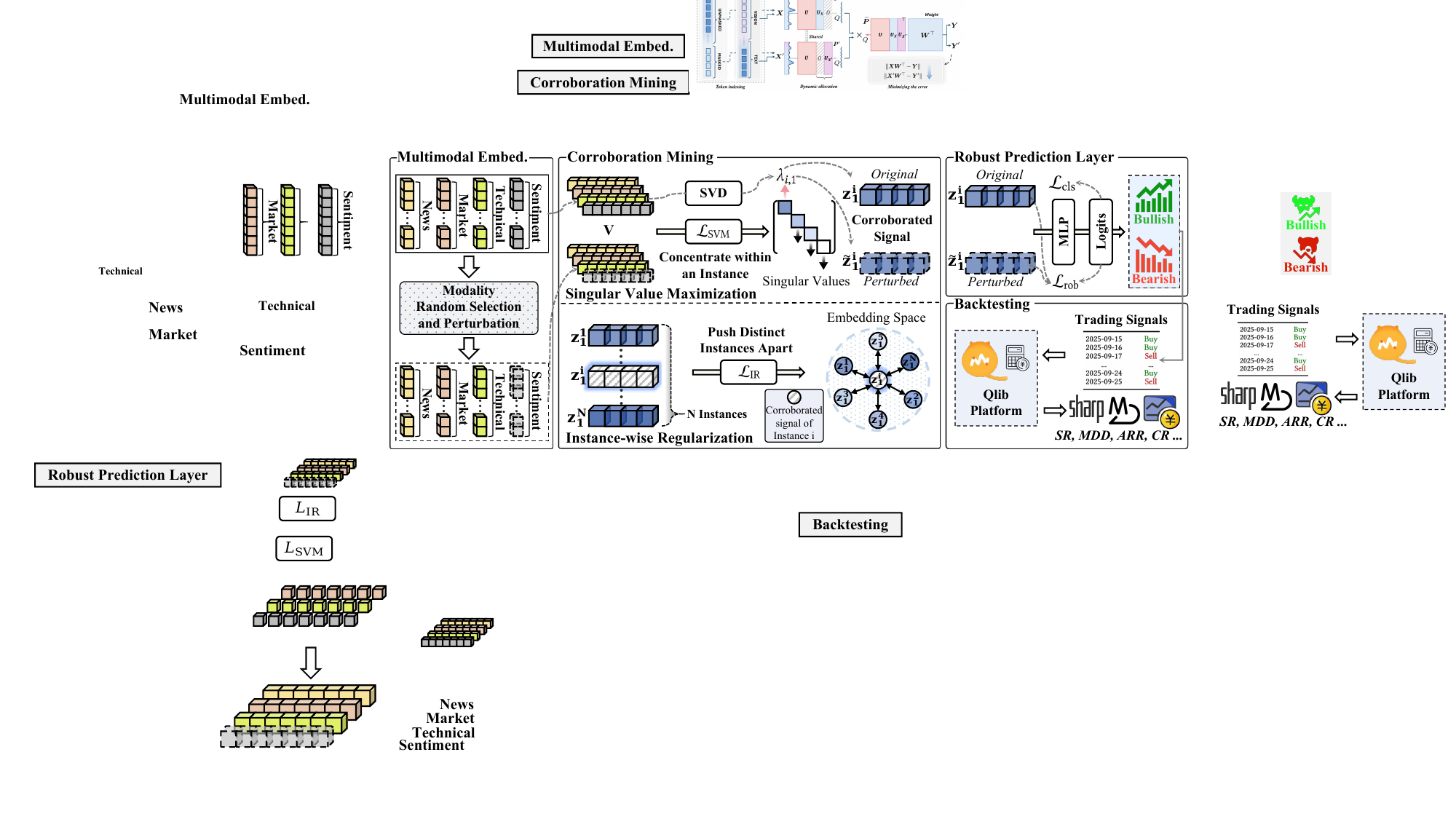}
    \vspace{-3mm}
    \caption{The overall framework of \method{}. Four modality streams are encoded into a shared alignment space and stacked into a per-instance matrix $\mathbf{V}$. \textit{Corroboration Mining} 
    maximizes the leading singular value of $\mathbf{V}$, concentrating the modality representations on a dominant spectral component, while signed modality contributions determine the resulting corroborated signal $\mathbf{z}$
    , and regularizes these directions across instances. 
    \method{} adopts a \textit{Robust Prediction Layer} via perturbation to enhance the prediction robustness. Finally, the resulting signals are backtested.}
    \vspace{-3mm}
    \label{fig:overall_structure_section}
\end{figure*}

\section{METHODOLOGY}
\label{sec:method}
Our premise is simple: a trading signal is reliable when heterogeneous financial modalities corroborate it, and fragile when it rests on any one modality. \method{} builds upon this premise to summarize task-relevant corroboration among heterogeneous financial inputs, as illustrated in Figure~\ref{fig:overall_structure_section}. It operationalizes corroboration through the spectral structure and signed modality support of a per-instance representation matrix, and comprises four modules: \textbf{multimodal embedding} (\textit{c.f.}, Sec.~\ref{sec:encoder}), \textbf{corroboration mining} (\textit{c.f.}, Sec.~\ref{sec:svm}), \textbf{robust prediction layer} (\textit{c.f.}, Sec.~\ref{sec:robust}), and \textbf{backtesting} (\textit{c.f.}, Appendix~\ref{app:backtest}). 

\vspace{2mm}
\noindent\textbf{Overview}. \emph{Multimodal embedding} maps the heterogeneous inputs into a common alignment space and stacks their normalized representations into a per-instance matrix. \emph{Corroboration mining} then identifies a dominant spectral component and aggregates the signed modality projections on this component. The resulting signal is strong only when the component is both spectrally concentrated and non-canceling across modalities. A cross-instance regularizer preserves sample discrimination, while the \emph{robust prediction layer} maintains the resulting signal under single-modality degradation. Finally, the predictions are converted into trading decisions and evaluated through \emph{backtesting}~\cite{yang2020qlib}.

\subsection{Modality-specific Financial Encoders}
\label{sec:encoder}
Financial modalities differ in their statistical structure, so we encode them with modality-aware networks rather than a single shared encoder. We describe the encoding for one sampled instance and omit its index for brevity. Each modality $m \in \mathcal{M}$ is processed by a
modality-specific encoder $f_m$, which maps the corresponding input window to a modality-specific embedding $\mathbf{h}^{m}=f_m\left(\mathbf{X}^{m};\boldsymbol{\theta}_m\right),$
where $\boldsymbol{\theta}_m$ are the encoder parameters and $\mathcal{M}$ is the modality set introduced in Sec.~\ref{sec:preliminary}. 
We instantiate the market and technical encoders as LSTMs that summarize the temporal dynamics of the two streams. For the news and sentiment streams, whose inputs are precomputed daily embeddings (see Appendix~\ref{app:datasets}), the corresponding encoders aggregate the embeddings over the window and apply lightweight projection heads.

We then bring all modalities into a common geometry in which shared structure and signed modality support can be characterized. 
To this end, each embedding $\mathbf{h}^{m}$ is mapped into a shared alignment space $\mathbb{R}^{d_c}$ by a modality-specific alignment map $\psi_m$ and $\ell_2$-normalized onto the unit sphere: $\mathbf{v}^{m}=\psi_m(\mathbf{h}^{m}),
\bar{\mathbf{v}}^{m}=\frac{\mathbf{v}^{m}}{\lVert \mathbf{v}^{m} \rVert_2}.$
Stacking the normalized vectors column-wise gives:
\begin{equation}
\mathbf{V}=\big[\bar{\mathbf{v}}^{m_1},\bar{\mathbf{v}}^{m_2},\ldots,\bar{\mathbf{v}}^{m_k}\big]\in\mathbb{R}^{d_c\times k},
\qquad k=|\mathcal{M}|.
\end{equation}
Every column of $\mathbf{V}$ lies on the unit sphere. The modality representations share a common scale and become directly comparable, providing the basis for extracting their shared spectral structure and signed modality support.

\subsection{Multimodal Corroboration Mining}
\label{sec:svm}
\emph{Multimodal corroboration} is represented by two factors: spectral concentration and net signed support. Spectral concentration measures whether the modality representations admit a dominant shared component, while net signed support measures whether their projections on this component reinforce or cancel one another. The two factors jointly determine the strength of the resulting corroborated signal. Their formal relationship is summarized in this section and proved in Appendix~\ref{app:theoretical_analysis}.

\vspace{2mm}
\noindent\highlight{Corroboration enhancement via singular value maximization.}
Recall that the normalized modality vectors of a sample are stacked column-wise into $\mathbf{V}\in\mathbb R^{d_c\times k}$. We now reinstate the instance index $i$ for the batch and obtain the singular spectrum of $\mathbf{V}_i$ by SVD~\cite{golub2013matrix,wall2003singular,abdi2007singular}, \textit{i.e.},$\mathbf{V}_i=\mathbf{P}_i\mathbf{\boldsymbol\Lambda}_i\mathbf{R}_i^{\top},\boldsymbol\Lambda_i=\mathrm{diag}\left(\lambda_{i,1},\lambda_{i,2},\ldots,\lambda_{i,k}\right),$ where the columns of $\mathbf{P}_i$ are left singular vectors, the columns of $\mathbf{R}_i$ are right singular vectors, and $\lambda_{i,1}$ is the largest singular value.

To concentrate the multimodal representations on a dominant shared
component, we treat the singular values as logits and apply a
softmax-based objective:
\begin{equation}
\mathcal L_{\mathrm{SVM}}
=
-\frac{1}{B}
\sum_{i=1}^{B}
\log
\frac{
\exp(\lambda_{i,1}/\gamma)
}{
\sum_{j=1}^{k}
\exp(\lambda_{i,j}/\gamma)
},
\label{eq:svm_loss}
\end{equation}
where $B$ is the batch size and $\gamma$ is a temperature parameter.
Because every column of $\mathbf{V}_i$ is unit-normalized, its total spectral energy is fixed: $\sum_{j=1}^{k}\lambda_{i,j}^2=k$. Under this fixed-energy constraint, transferring spectral energy from a non-leading component to the leading component decreases $\mathcal L_{\mathrm{SVM}}$ together with the residual energy outside the leading rank-1 component, as established in Proposition~\ref{prop:spectral_concentration}. We define $\rho_i=\lambda_{i,1}^2/k$ as the spectral concentration factor. A larger $\rho_i$ indicates that a greater proportion of the modality representations is captured by the dominant component. We next use the signed modality contributions along this dominant component to construct the corroborated signal.

\noindent\highlight{Signed corroborated signal.}
Let $\mathbf{V}_i^{(1)}=
\mathbf{p}_{i,1}
\lambda_{i,1}
\mathbf{r}_{i,1}^{\top}$
denote the leading rank-1 component of $\mathbf{V}_i$. The signed contribution of modality $m$ along the leading direction is $a_{i,m} = \mathbf{p}_{i,1}^{\top} \bar{\mathbf{v}}_i^m$, where a positive or negative value indicates whether the modality supports or opposes the selected orientation of the leading component. Let $\bar{\mathbf{v}}_i=k^{-1}\sum_{m=1}^{k}\bar{\mathbf{v}}_i^m$ (see Eq.~\ref{eq:app_defs}) denote the modality mean. The corroborated signal is its projection onto the leading direction: $\mathbf{z}_i=\mathbf{{proj}_{{p}_{i,1}}(\bar{\mathbf{v}}_i)}$. Thus, $\mathbf{z}_i$ is equivalently the projection of the modality mean onto the per-instance dominant direction. Importantly, its magnitude retains the signed aggregation of the modality contributions: opposing contributions reduce the signal, while balanced opposition produces a zero signal. The signal strength admits the factorization $\left\|\boldsymbol z_i\right\|_2^2=\rho_i\,\delta_i$ and $\delta_i={\left(\boldsymbol r_{i,1}^{\top}\boldsymbol 1_k\right)^2}/{k}$, where $\rho_i$ measures spectral concentration and $\delta_i\in[0,1]$ measures net signed support. Therefore, a strong corroborated signal requires both a dominant shared component and limited cancellation among modality contributions. Proposition~\ref{prop:corroboration} derives this factorization and establishes the invariance of $\mathbf z_i$ to the joint sign ambiguity of the leading singular vectors.

\noindent\highlight{Instance-wise regularization.}
The singular value maximization objective and preceding construction characterize corroboration within an instance, but admit a degenerate solution if applied alone. All instances can collapse onto the same global leading direction, erasing cross-instance distinctions. To discourage this collapse, we regularize the corroborated signals across instances with an instance-wise contrastive term. For a batch of size $B$, we define:
\begin{equation}
\mathcal L_{\mathrm{IR}}
=
-\frac{1}{B}
\sum_{i=1}^{B}
\log
\frac{
\exp(\mathbf z_i^\top\mathbf z_i/\gamma)
}{
\sum_{j=1}^{B}
\exp(\mathbf z_i^\top\mathbf z_j/\gamma)
},
\label{eq:ir_loss}
\end{equation}
where $z_i$ corresponds to the corroborated signal of instance $i$. Minimizing $\mathcal{L}_{\mathrm{IR}}$ suppresses the similarity between the corroborated signals of different instances, preserving cross-instance
discrimination (see Proposition~\ref{prop:corroboration}) while retaining the within-instance spectral concentration
induced by $\mathcal L_{\mathrm{SVM}}$.

The task relevance of the corroborated signal is further induced
through joint optimization with the movement-prediction loss introduced
in Sec.~\ref{sec:robust}.
Specifically, \(\mathcal L_{\mathrm{SVM}}\) promotes the
dominant spectral component, while $\mathcal L_{\mathrm{cls}}$ supplies a task-alignment gradient for the net signed support along the leading component. Proposition~\ref{prop:gradient} further provides a local gradient analysis that clarifies how these complementary effects arise under joint optimization.

\subsection{Robust Prediction Layer}
\label{sec:robust}
The reliability of financial modalities varies considerably across trading days: news is often sparse, sentiment is noisy, and the market and technical series are prone to transient shocks~\cite{li2020multimodal,shapiro2022measuring,erdemlioglu2021market}. The corroborated latent asset signals should therefore survive the degradation of any single modality. We enforce this property by training \method{} to produce consistent outputs for a clean sample and a perturbed copy in which one modality is corrupted. For each mini-batch, we randomly select one modality $a$ from $\mathcal{M}$ and corrupt only that branch with a perturbation operator $\Pi$, instantiated as either modality masking or additive Gaussian noise: \begin{equation}
\widetilde{\mathbf{X}}^{a}=\Pi(\mathbf{X}^{a})=
\begin{cases}
\mathbf{0}, & \text{(Masking)},\\[2pt]
\mathbf{X}^{a}+\boldsymbol{\epsilon},\quad \boldsymbol{\epsilon}\sim\mathcal{N}(\mathbf{0},\sigma_\epsilon^2\mathbf{I}), & \text{(Gaussian noise)}.
\end{cases}
\end{equation}

The remaining modalities are left intact, $\widetilde{\mathbf{X}}^{m}=\mathbf{X}^{m}(m\neq a)$, which yields the corrupted multimodal input $\widetilde{\mathcal{X}}^{a}={\widetilde{\mathbf{X}}^{m}}({m\in\mathcal{M})}$.
The clean input $\mathcal{X}$ and its corrupted copy $\widetilde{\mathcal{X}}^{a}$ pass through the same encoders, alignment maps, and corroboration mining module (Sections~\ref{sec:encoder}--\ref{sec:svm}), producing task-conditioned corroborated signals $\mathbf{z}_i$ and $\tilde{\mathbf{z}}_i$ for each instance $i$. A shared prediction head $g(\cdot)$, implemented as a multilayer perceptron (MLP), maps each corroborated signal to a scalar movement logit, $\ell_i=g(\mathbf{z}_i), \tilde{\ell}_i=g(\tilde{\mathbf{z}}_i).$
The robustness loss penalizes the discrepancy between the clean and
perturbed views at both the decision and representation levels: 
\begin{equation}
\mathcal{L}_{\mathrm{rob}}=
\frac{1}{B}\sum_{i=1}^{B}
\Big[
\mathrm{MSE}(\ell_i,\tilde{\ell}_i)
+
\mathrm{MSE}(\mathbf{z}_i,\tilde{\mathbf{z}}i)
\Big].
\label{eq:rob}
\end{equation}
The first term stabilizes the predicted logit, while the second
preserves the corroborated signal under modality degradation. Because
$\mathbf{z}_i$ is sign-invariant, the representation-level loss is
unaffected by arbitrary SVD sign flips and encourages the model to rely on the remaining modalities.

\vspace{2mm}
\noindent\highlight{Optimization objectives.} 
The prediction objective provides task supervision for the
corroborated representation constructed in Sec.~\ref{sec:svm}.
For supervision, we apply a binary cross-entropy loss to the movement
logit predicted from the clean corroborated signal: $y_i\in\{0,1\}$, $ \mathcal{L}_{\mathrm{cls}}=
\frac{1}{B}\sum_{i=1}^{B}
\mathrm{BCE}(\ell_i,y_i).$
Combining the corroboration mining objectives with robustness and
supervision, the joint training objective of \method{} is
\begin{equation}
\mathcal L_{\method{}}
=
\mathcal L_{\mathrm{cls}}
+
\beta_1\mathcal L_{\mathrm{SVM}}
+
\beta_2\mathcal L_{\mathrm{IR}}
+
\beta_3\mathcal L_{\mathrm{rob}},
\label{eq:overall_objective}
\end{equation}
where $\beta_1$, $\beta_2$, and $\beta_3$ control the contributions of
singular value maximization, instance-wise regularization, and robust
consistency, respectively (see detailed algorithm flow in Appendix~\ref{app:algorithm_flow}).

\begin{table}[t]
\centering
\small
\caption{Per-mini-batch time complexity of the additional alignment components in \method{}.}
\vspace{-2mm}
\label{tab:complexity}
\begin{tabular}{@{}l|l@{}}
\hline
\textbf{Component} & \textbf{Time complexity} \\
\hline
\textit{Singular Value Maximization (Sec.~\ref{sec:svm})} & \\
\quad Gram-matrix construction   & $\mathcal{O}(B d_c k^2)$ \\
\quad Compact eigendecomposition & $\mathcal{O}(B k^3)$ \\
\quad Leading-direction recovery & $\mathcal{O}(B d_c k)$ \\
\hline
\textit{Instance-wise regularization (Sec.~\ref{sec:svm})} & \\
\quad Pairwise similarity        & $\mathcal{O}(B^2 d_c)$ \\
\hline
\textit{Robust consistency (Sec.~\ref{sec:robust})} & \\
\quad Perturbed-view recomputation & $\mathcal{O}(B d_c k^2)$ \\
\hline
Overall alignment cost           & $\mathcal{O}(B d_c k^2 + B k^3 + B^2 d_c)$ \\
Dominant term ($k\ll d_c$)       & $\mathcal{O}(B d_c k^2 + B^2 d_c)$ \\
\hline
\end{tabular}
\vspace{-2mm}
\end{table}

\subsection{Complexity Analysis}
Since $k\ll d_c$, we recover the leading singular triplet of
$\mathbf{V}_i\in\mathbb R^{d_c\times k}$ from the compact Gram matrix
$\mathbf{V}_i^\top\mathbf{V}_i\in\mathbb R^{k\times k}$, and then
construct the leading rank-1 component and corroborated signal
$\mathbf{z}_i$. As summarized in Table~\ref{tab:complexity}, the total
mini-batch overhead is $\mathcal{O}(B d_c k^2+B^2d_c)$, which is dominated by the modality encoders.

%% file: tables/multimodal_data.tex






 







\begin{table}[t]
\centering
\caption{Diverse modalities adopted for financial trading, including market, technical, news and sentiment streams.}
\vspace{-3mm}
\label{tab:data_modalities}

\footnotesize
\renewcommand{\arraystretch}{1.18}
\setlength{\tabcolsep}{4pt}

\begin{tabularx}{\linewidth}{
@{}
>{\raggedright\arraybackslash}p{0.16\linewidth}
>{\raggedright\arraybackslash}p{0.25\linewidth}
>{\raggedright\arraybackslash}X
@{}
}
\toprule
\textbf{Modality}
&
\textbf{Data Source}
&
\textbf{Feature}
\\
\midrule

\textbf{\textit{Market}}
&
Yahoo Finance API
&
Includes observations, namely open, high, low, adjusted close, and trading volume.
\\

\textbf{\textit{Technical}}
&
Finnhub API 
&
Provides price patterns, indicators, and valuation ratios, e.g., RSI, volatility, and P/E.
\\

\textbf{\textit{News}}
&
Alpaca API
&
Covers company events, announcements, and policy developments, etc.
\\

\textbf{\textit{Sentiment}}
&
Alpha Vantage API
&
Comprises investor reactions, analyst views, and social-media opinions, etc.
\\

\bottomrule
\end{tabularx}
\vspace{-4mm}
\end{table}




%% file: tables/baselines.tex

\providecommand{\best}[1]{\textbf{#1}}
\providecommand{\second}[1]{\underline{#1}}

\definecolor{oursblue}{HTML}{DDE7FA}

\begin{table*}[t]
\centering

\caption{
Overall performance across equity and cryptocurrency assets.
ARR\% denotes the annual rate of return, SR denotes the Sharpe ratio. Higher is better for both ARR\% and SR.
Best results are bolded and second-best results are underlined.}
\vspace{-3mm}
\label{tab:overall_performance_full}

\resizebox{\textwidth}{!}{
\begin{tabular}{
l*{12}{N}
}

\toprule

\multirow{2}{*}{\textbf{Method}}
& \multicolumn{2}{c}{\textbf{AAPL}}
& \multicolumn{2}{c}{\textbf{AMZN}}
& \multicolumn{2}{c}{\textbf{GOOG}}
& \multicolumn{2}{c}{\textbf{MSFT}}
& \multicolumn{2}{c}{\textbf{TSLA}}
& \multicolumn{2}{c}{\textbf{BTCUSD}}
\\

\cmidrule(lr){2-3}
\cmidrule(lr){4-5}
\cmidrule(lr){6-7}
\cmidrule(lr){8-9}
\cmidrule(lr){10-11}
\cmidrule(lr){12-13}

& \ARRhead & \SRhead
& \ARRhead & \SRhead
& \ARRhead & \SRhead
& \ARRhead & \SRhead
& \ARRhead & \SRhead
& \ARRhead & \SRhead
\\

\midrule


B\&H
& 28.48 & 0.62
& 29.42 & 0.65
& 107.62 & 2.02
& 70.66 & 1.78
& 140.97 & 1.39
& 53.48 & 1.42
\\

MACD
& -23.54 & -0.77
& -35.10 & -1.00
& 28.80 & 0.89
& 12.38 & 0.62
& 49.62 & 0.92
& 30.25 & 1.35
\\

ZMR
& 15.03 & 0.92
& -7.11 & -0.99
& 7.78 & 1.25
& 3.31 & 0.45
& 20.60 & 0.61
& 14.43 & 1.05
\\

SMA
& 16.87 & 0.83
& -9.03 & -0.50
& 74.02 & 2.14
& 3.26 & 0.63
& 25.11 & 0.69
& -9.93 & -0.60
\\

\midrule


LSTM
& 14.82 & 0.40
& 36.89 & 0.87
& 66.14 & 1.61
& 50.52 & 1.44
& 99.55 & 1.17
& 49.60 & 1.35
\\

Transformer
& 17.29 & 0.45
& 18.53 & 0.52
& 79.47 & 1.81
& 31.26 & 0.98
& 93.34 & 1.36
& 56.51 & 1.68
\\

DQN
& 27.43 & \second{0.93}
& 15.27 & 0.60
& -12.91 & -0.54
& 11.24 & 0.68
& 44.76 & 0.90
& 12.36 & 0.63
\\

PPO
& 21.65 & 0.58
& 25.67 & 0.64
& 40.95 & 1.09
& 32.73 & 1.32
& 38.59 & 0.88
& 26.98 & 0.93
\\

Kronos
& 38.38 & 0.91
& 40.23 & 0.96
& 63.28 & 1.67
& 37.07 & 1.41
& \second{142.71} & 1.63
& 52.61 & 1.91
\\

\midrule


Qwen3-8B
& 2.16 & 0.16
& 1.44 & 0.14
& 44.48 & 1.31
& 17.69 & 0.69
& 90.64 & 1.24
& 36.20 & 1.16
\\

DeepSeek-R1-0528
& 0.70 & 0.11
& 13.44 & 0.43
& 65.35 & 1.63
& 16.14 & 0.78
& 71.51 & 1.09
& 18.69 & 0.72
\\

Llama4-Scout-17B
& 18.50 & 0.48
& -21.49 & -0.42
& 75.70 & 1.82
& 29.28 & 1.11
& 108.10 & 1.39
& 39.92 & 1.22
\\

\midrule


FinAgent
& 29.85 & 0.89
& 24.84 & 0.58
& 109.08 & 2.04
& 69.18 & 2.20
& 123.85 & 1.30
& 24.45 & 0.84
\\

TradingAgents
& -11.30 & -0.15
& 22.30 & 0.64
& 63.72 & 1.53
& 46.18 & 1.54
& 57.56 & 1.01
& 45.24 & 1.48
\\

DeepFund
& 28.33 & 0.66
& 10.96 & 0.36
& 50.70 & 1.34
& \second{75.35} & 1.98
& 52.63 & \second{1.75}
& 9.17 & 0.41
\\

VTA
& \second{38.79} & 0.88
& \second{49.69} & \second{1.23}
& \second{110.73} & \second{2.22}
& 72.18 & \second{2.42}
& 121.84 & 1.50
& \second{68.84} & \second{1.92}
\\
\midrule
\rowcolor{oursblue}
\textbf{CoLAS~(ours)}
& \best{67.79} & \best{1.47}
& \best{56.98} & \best{1.25}
& \best{124.68} & \best{2.46}
& \best{97.16} & \best{2.73}
& \best{159.86} & \best{1.89}
& \best{84.64} & \best{2.65}
\\

\bottomrule

\end{tabular}
}
\end{table*}

%% file: tex/4experiment.tex
\section{EXPERIMENTS}
\label{sec:experiments}
In this section, we conduct experiments and analysis to answer the following research questions (\textbf{RQs}):
\begin{itemize}[leftmargin=*]
    \item \textbf{RQ1}: Does \method{} outperform other conventional baselines and state-of-the-art LLM trading agents? (Sec.~\ref{sec:overall_performance})
    \item \textbf{RQ2}: Does each component of \method{} contribute to performance? (Sec.~\ref{sec:ablation})
    \item \textbf{RQ3}: Can \method{} reduce the modality opposition to support corroboration enhancement? (Sec.~\ref{sec:corroboration_validation})
    \item \textbf{RQ4}: Is \method{} still effective on more metrics? (Sec.~\ref{sec:more_metrics})
    \item \textbf{RQ5}: Is \method{} stable under different model-specific hyperparameter
    settings and various window sizes? (Sec.~\ref{sec:hyperparameters})
\end{itemize}

\subsection{Experimental Setup}

\noindent\highlight{Datasets.}
We evaluate \method{} on six asset-specific datasets, including five widely traded U.S. equities: Apple Inc. (AAPL), Amazon.com Inc.
(AMZN), Alphabet Inc. (GOOG), Microsoft Corporation (MSFT), and
Tesla Inc. (TSLA), together with Bitcoin (BTCUSD). This selection aims
to showcase \method{}'s versatility and consistency across various
financial assets. These assets are chosen for their extensive news coverage and representation of different market sectors, and these data provide a robust
basis for assessing \method{}'s generalization capabilities across
diverse financial environments. 
Each dataset contains four temporally aligned modalities, as summarized in Table~\ref{tab:data_modalities}. The primary datasets are chronologically divided into non-overlapping training (2023-10-01--2024-09-30), validation (2024-10-01--2025-03-31), and testing
(2025-04-01--2025-09-30) periods. We further consider an extended evaluation period, in which the training and validation periods remain
unchanged and the test period spans 2025-04-01--2026-03-20. Under this extended test horizon setting, we include five additional U.S. equities spanning different sectors and firm-size groups, and evaluate four selected assets from the primary datasets in Appendices~\ref{app:broader} and~\ref{app:extended_test_horizon}.

\noindent\highlight{Evaluation metrics.}
We compare \method{} and the baselines using the financial metrics adopted in prior work~\cite{zhang2024multimodal,yu2024fincon}, including one profitability metric: annual rate of return (ARR), and one risk-adjusted performance metric: Sharpe ratio (SR). Definitions and formulas are as follows (more metrics are detailed in Appendix~\ref{app:evaluation_metrics}):
\begin{itemize}[leftmargin=*]
    \item \textit{Annual Rate of Return (ARR)} is the annualized average return rate, calculated as $\mathrm{ARR} = \frac{V_T - V_0}{V_0} \times \frac{C}{T}$, where $T$ is the total number of trading days and $C$ is the annualized factor, set to 252 for equities and 365 for cryptocurrency. $V_T$ and $V_0$ represent the final and initial portfolio values.
    \item \textit{Sharpe Ratio (SR)} measures the risk-adjusted returns of portfolios. It is defined as $\mathrm{SR} = \frac{\mathbb{E}[\mathbf{r}]}{\sigma[\mathbf{r}]}$, where $\mathbb{E}[\cdot]$ is the expectation, $\sigma[\cdot]$ is the standard deviation, and $\mathbf{r} = \left[\frac{V_1 - V_0}{V_0}, \frac{V_2 - V_1}{V_1}, \dots, \frac{V_T - V_{T-1}}{V_{T-1}}\right]^{\top}$ denotes the sequence of daily returns. 
\end{itemize}

\noindent\highlight{Baselines. }
We comprehensively evaluate \method{} against 16 diverse baselines categorized into Rule-based Strategies, Single-modal Models, Multimodal General LLMs and Multimodal Financial LLMs. First, the Rule-based Strategies include traditional quantitative methods: \textit{Buy-and-Hold (B\&H)}~\cite{li2012pamr}, \textit{MACD}~\cite{chong2008technical}, \textit{ZMR}~\cite{zhang2024multimodal}, and \textit{SMA}~\cite{brock1992simple}. Second, the Single-modal Models include \textit{LSTM}~\cite{yang2020qlib}, \textit{Transformer}~\cite{yang2020qlib}, \textit{DQN}~\cite{mnih2013playing}, \textit{PPO}~\cite{schulman2017proximal}, and \textit{Kronos}~\cite{shi2026kronos}. Third, the Multimodal General LLMs include \textit{Qwen3-8B}~\cite{yang2025qwen3}, \textit{DeepSeek-R1-0528}~\cite{guo2025deepseek,huang2025explainable}, and \textit{Llama4-Scout-17B}~\cite{meta2025llama4}. Fourth, the Multimodal Financial LLMs include \textit{FinAgent}~\cite{zhang2024multimodal}, \textit{TradingAgents}~\cite{xiao2024tradingagents}, \textit{DeepFund}~\cite{li2026time}, and \textit{VTA}~\cite{koa2025reasoning}. Further details are in Appendix~\ref{app:baselines}.

\noindent\highlight{Implementation details.}
\label{sec:implementation_details}
All experiments are conducted on a server equipped with 8 NVIDIA A100 GPUs. For supervised methods, we perform a hyperparameter search over learning rates in $\{2\times10^{-5}, 1\times10^{-4}, 1\times10^{-3}\}$ and look-back window sizes in $\{10, 14, 20\}$, where applicable. We fix the batch size to 64 and optimize supervised methods with AdamW~\cite{loshchilov2017decoupled}. All experiments are repeated five times, and we report the mean performance across the five runs. Statistical significance is assessed against the strongest baseline, with $p<0.05$ considered statistically significant.
More details on model architectures and hyperparameter settings are provided in Appendix~\ref{app:datasets} and~\ref{app:hyperparameter_settings}.

\begin{figure*}[t]
    \centering
    \includegraphics[width=0.98\textwidth]{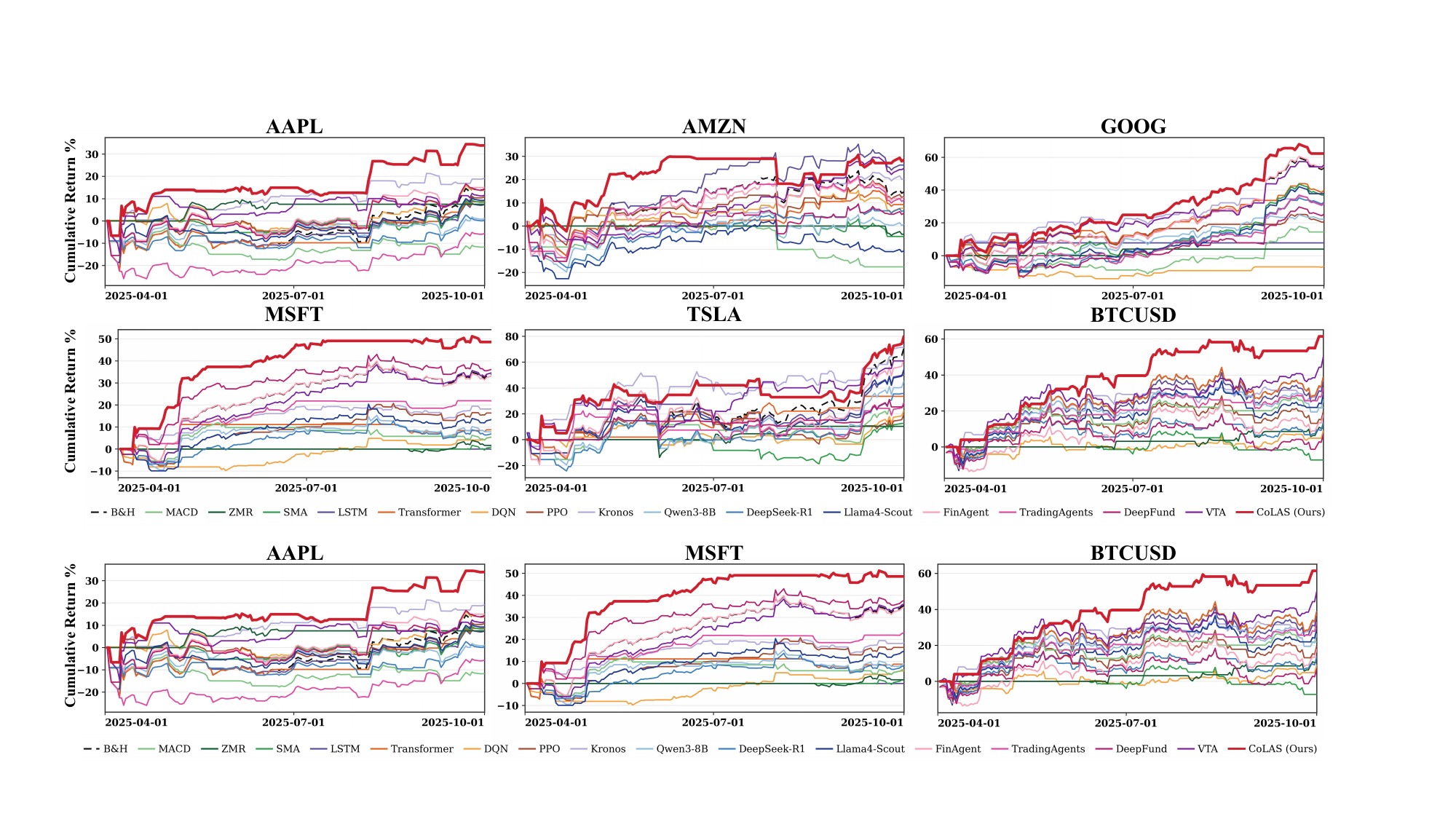}
    \vspace{-3mm}
    \caption{Overall performance over time between \method{} and other baselines across three assets from different sectors.}
    \vspace{-3mm}
    \label{fig:cr_curve}
\end{figure*} 

\subsection{Overall Performance (RQ1)}
\label{sec:overall_performance}
\noindent\highlight{Main results on the U.S. stock market. }
Table~\ref{tab:overall_performance_full} reports trading performance on five U.S. equities. We draw three observations. First, \method{} delivers the strongest profitability, reaching an ARR of 67.79\%, 56.98\%, 124.68\%, 97.16\%, and 159.86\% on the five stocks. Relative to the strongest baseline on each stock, \method{} improves ARR by 28.9\% on MSFT (from 75.35\% to 97.16\%) and by 12.0\% on TSLA (from 142.71\% to 159.86\%), and it surpasses the second-best method on every remaining stock. Second, \method{} also leads in risk-adjusted performance, obtaining the best SR on all five datasets, with particularly large margins on AAPL (1.47 \textit{vs.} 0.93) and MSFT (2.73 \textit{vs.} 2.42). The concurrent improvements in ARR and SR indicate that the return
gains are not accompanied by a deterioration in standard
risk-adjusted performance. Instead, \method{} improves the return-risk trade-off itself. Third, the improvement is consistent across baseline families. \method{} surpasses not only rule-based and reinforcement learning strategies but also the strongest financial LLM (VTA) and the time-series foundation model Kronos, even though these baselines already exploit multiple modalities or large-scale pretraining. 
This result highlights the effectiveness of explicitly modeling multimodal corroboration relative to the evaluated baselines, including models that rely on large-scale pretraining.

Figure~\ref{fig:cr_curve} plots the cumulative return over the backtesting horizon and shows how these gains accrue (see detailed cumulative return analyses on all assets in Appendix~\ref{app:cr_curve}). On AAPL, \method{} stays among the leaders from the start and advances in a sharp step in the later part of the horizon, finishing highest with a cumulative return of 33.9\%. On MSFT, the advantage builds earlier, as \method{} climbs steeply in the first months to a cumulative return of about 48.6\% and then holds this level to the end, whereas competing agents keep oscillating and finish lower. 
On both stocks, the stepwise cumulative-return curves indicate that a substantial portion of the gains is realized during a limited number of periods and then preserved over relatively stable intervals. 
Due to space limitations, we provide more detailed results in terms of a broader asset pool and an extended test horizon in Appendices~\ref{app:broader} and~\ref{app:extended_test_horizon}.

\noindent\highlight{Main results on cryptocurrency.}
Table~\ref{tab:overall_performance_full} shows that \method{} also performs strongly on BTCUSD,
achieving the highest ARR of 84.64\% and the highest SR of 2.65.
Relative to the runner-up VTA, CoLAS improves ARR by 23.0\%
and SR from 1.92 to 2.65. These results indicate that the advantage
of CoLAS extends from U.S. equities to the more volatile
cryptocurrency setting. Figure~\ref{fig:cr_curve} further shows that \method{} establishes an early lead on BTCUSD and remains above the competing methods for most of the
test period. This result is consistent with the effectiveness of the
learned corroboration-oriented representation under substantial
market volatility.

\subsection{Ablation Study (RQ2)}
\label{sec:ablation}
Table~\ref{tab:ablation_modules} reports leave-one-component-out ablations
on AAPL and BTCUSD. Removing any component degrades both ARR
and SR, indicating that SVM, IR, and RPL provide complementary
benefits.

\input{tables/RQ3_ablation_study}

\noindent
\highlight{Effect of singular value maximization (SVM).} 
Removing SVM causes the largest ARR reduction among the three ablations on both assets, with ARR decreasing from 67.79\% to 59.54\% on AAPL and from 84.64\% to 74.88\% on
BTCUSD. Its effect on SR is modest on AAPL, where SR decreases from 1.47 to 1.38, but is more pronounced on BTCUSD, where SR decreases from 2.65 to 2.17. These results suggest that dominant-component concentration primarily contributes to profitability while also improving risk-adjusted performance,
particularly on BTCUSD.

\noindent
\highlight{Effect of instance-wise regularization (IR).} 
Removing IR degrades both metrics on both assets. On AAPL, ARR decreases from 67.79\% to 60.49\%, while SR decreases
from 1.47 to 1.30. On BTCUSD, ARR decreases from 84.64\% to 76.23\%, and SR decreases from 2.65 to 2.33. These results are consistent with the role of IR in preserving discriminative corroborated signals across instances, which contributes to downstream trading performance.

\noindent
\highlight{Effect of robust prediction layer. (RPL)} 
RPL enforces consistency between the clean view and a perturbed
view in which one modality is corrupted. Removing RPL causes the largest SR reduction among the three ablations, from 1.47 to 1.29 on AAPL and from 2.65 to 2.14 on BTCUSD. In contrast, its removal causes the smallest ARR reduction on both assets. Together with the robustness study results in Appendix~\ref{app:robustness}, these findings show that RPL not only improves risk-adjusted performance under the clean-input setting but also reduces performance degradation when an input modality is unavailable or stochastically removed. 
Additional comparisons with generic fusion strategies and alternative corroborated-signal constructions are provided in Appendix~\ref{app:ablation} (Table~\ref{tab:ablation_study_2}).

\begin{figure}[t]
    \centering
    \vspace{-2mm}
    \includegraphics[width=0.87\linewidth]{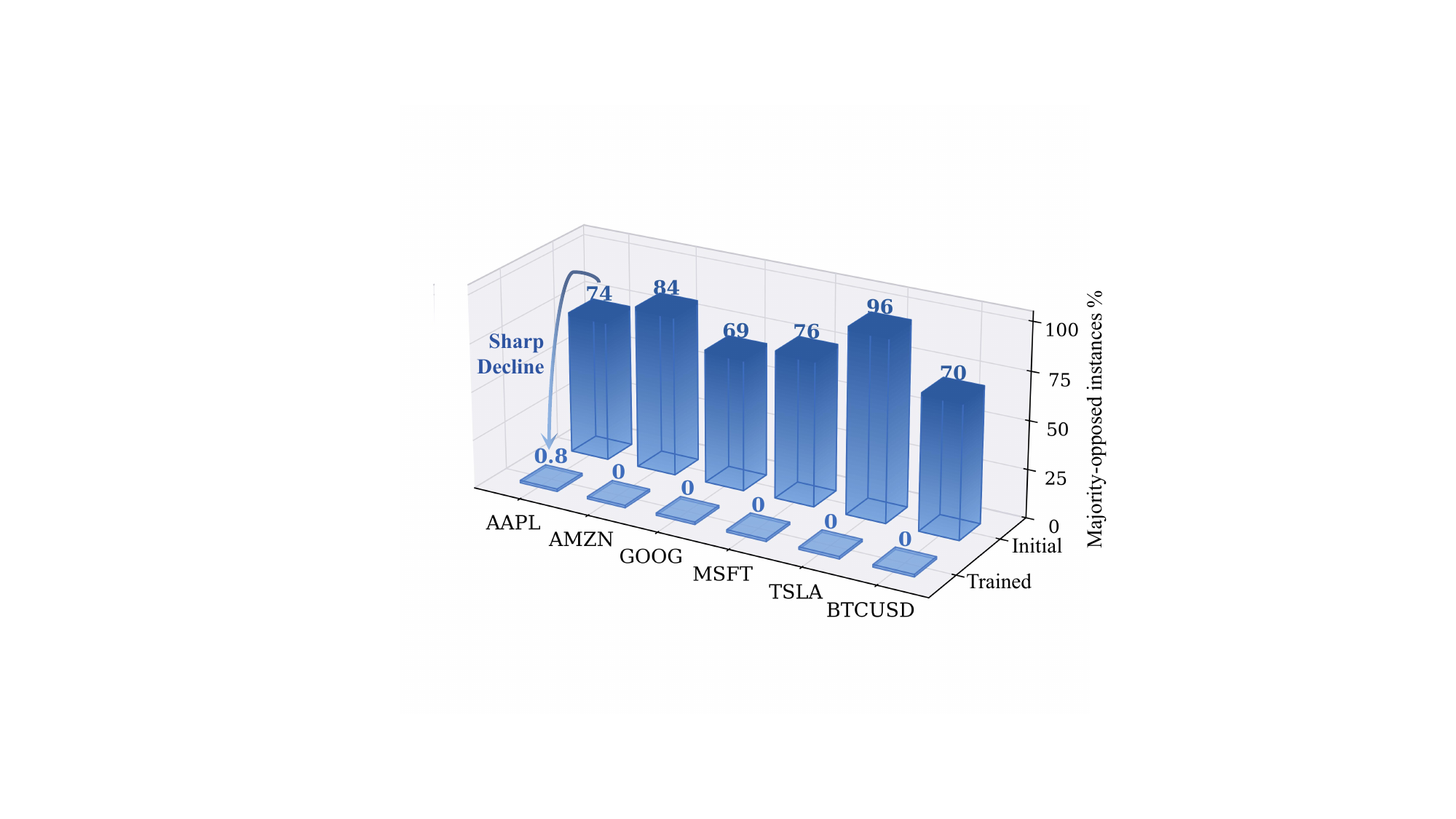}
    \vspace{-3mm}
    \caption{Prevalence of majority-opposed instances before and after joint optimization.}
    \vspace{-4mm}
    \label{fig:majority_opposed}
\end{figure}

\begin{figure}[t]
    \centering
    \includegraphics[width=\linewidth]{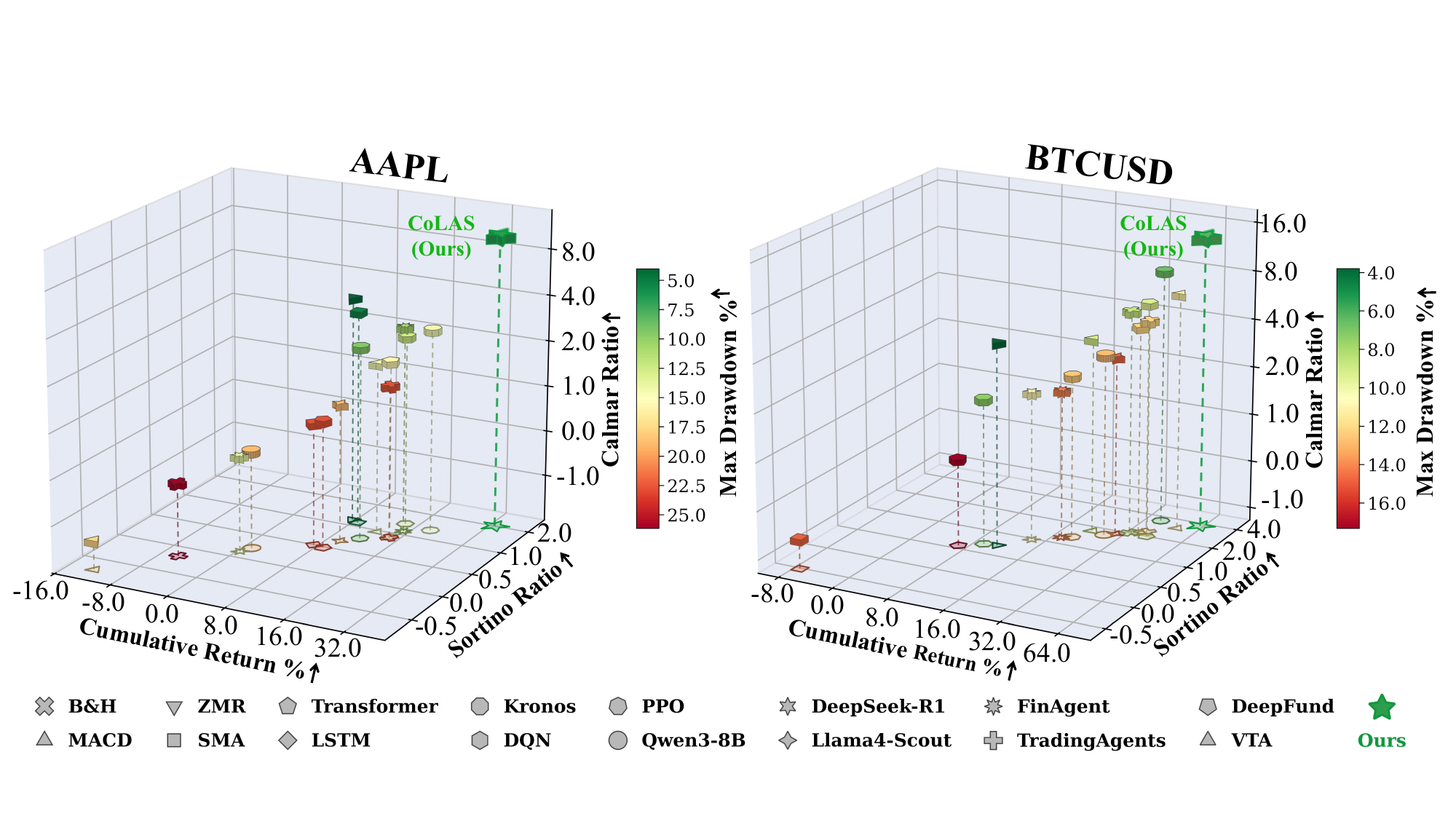}
    \vspace{-6mm}
    \caption{Extended metrics evaluated on different methods on the AAPL and BTCUSD datasets.}
    \Description{Extended metrics}
    \label{fig:exp_RQ4}
    \vspace{-4mm}
\end{figure}

\subsection{Corroboration Enhancement Analysis (RQ3)} 
\label{sec:corroboration_validation} 
To examine whether \method{} achieves the intended reduction in cross-modal opposition, we analyze the pairwise relationships among the four aligned modality representations.
According to our formulation, effective corroboration requires not only a dominant shared direction but also non-canceling modality contributions along this direction. Strong pairwise opposition may cause these contributions to cancel each other and consequently weaken the resulting corroborated signal. We therefore compute the pairwise cosine similarities among the four aligned modality representations and define an instance as majority-opposed when more than half of its modality pairs have negative cosine similarity. As shown in Figure~\ref{fig:majority_opposed}, before joint optimization, 69\%–96\% of the instances exhibit majority pairwise opposition across the six assets. After joint optimization, this proportion decreases to at most 0.8\%. This sharp decline shows that the optimization substantially reduces pairwise representational opposition. Because opposing modality directions can attenuate the signed aggregate used to construct the corroborated signal, the result supports the opposition-reduction aspect of the proposed corroboration mechanism.

\begin{figure*}[t]
    \centering
    \includegraphics[width=\textwidth]{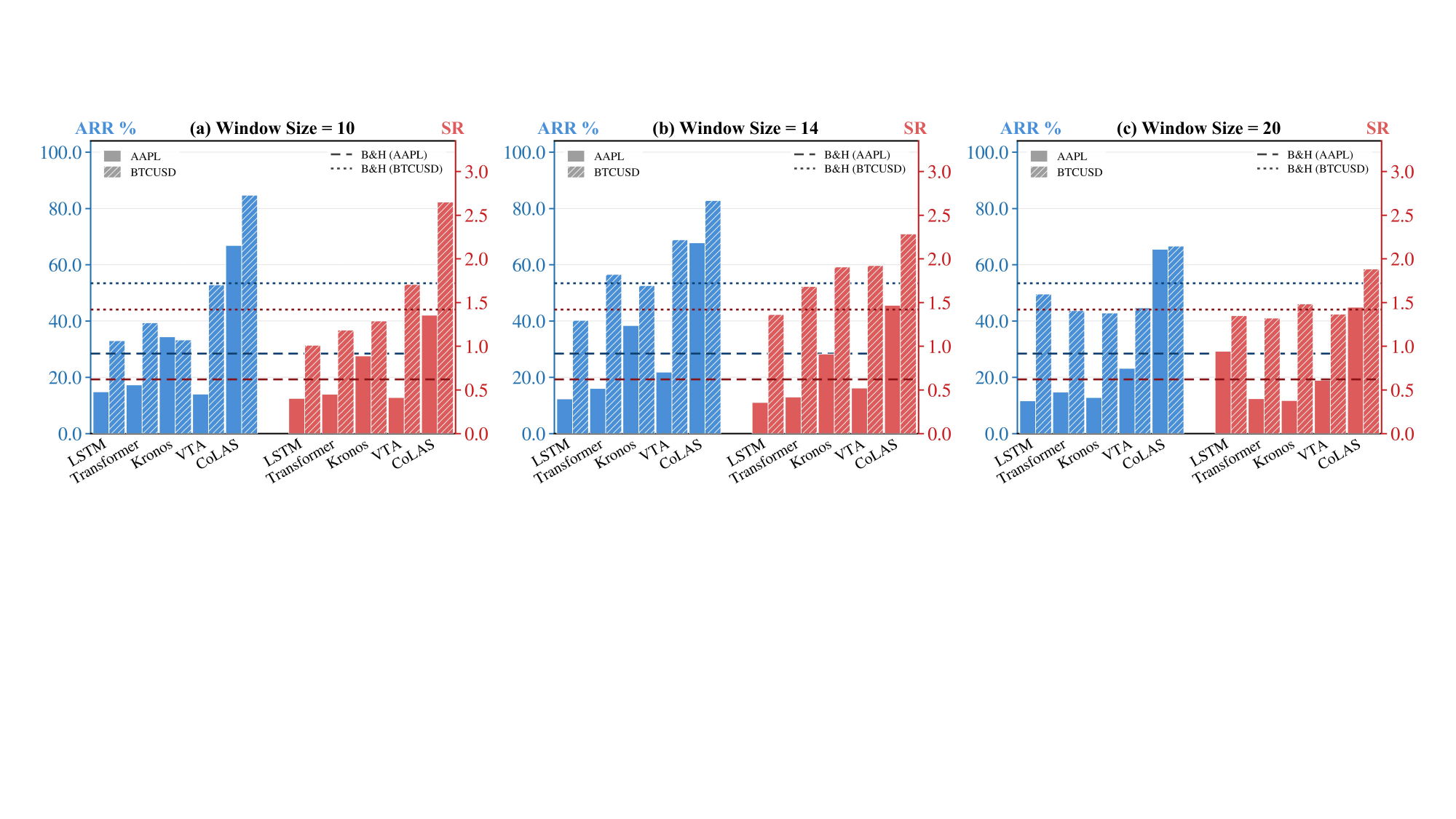}
    \vspace{-8mm}
    \caption{Sensitivity analyses of various window sizes on different methods on the AAPL and BTCUSD datasets.}
    \vspace{-3mm}
    \label{fig:exp_RQ5}
\end{figure*}

\begin{figure}[t]
    \centering
    \includegraphics[width=\linewidth]{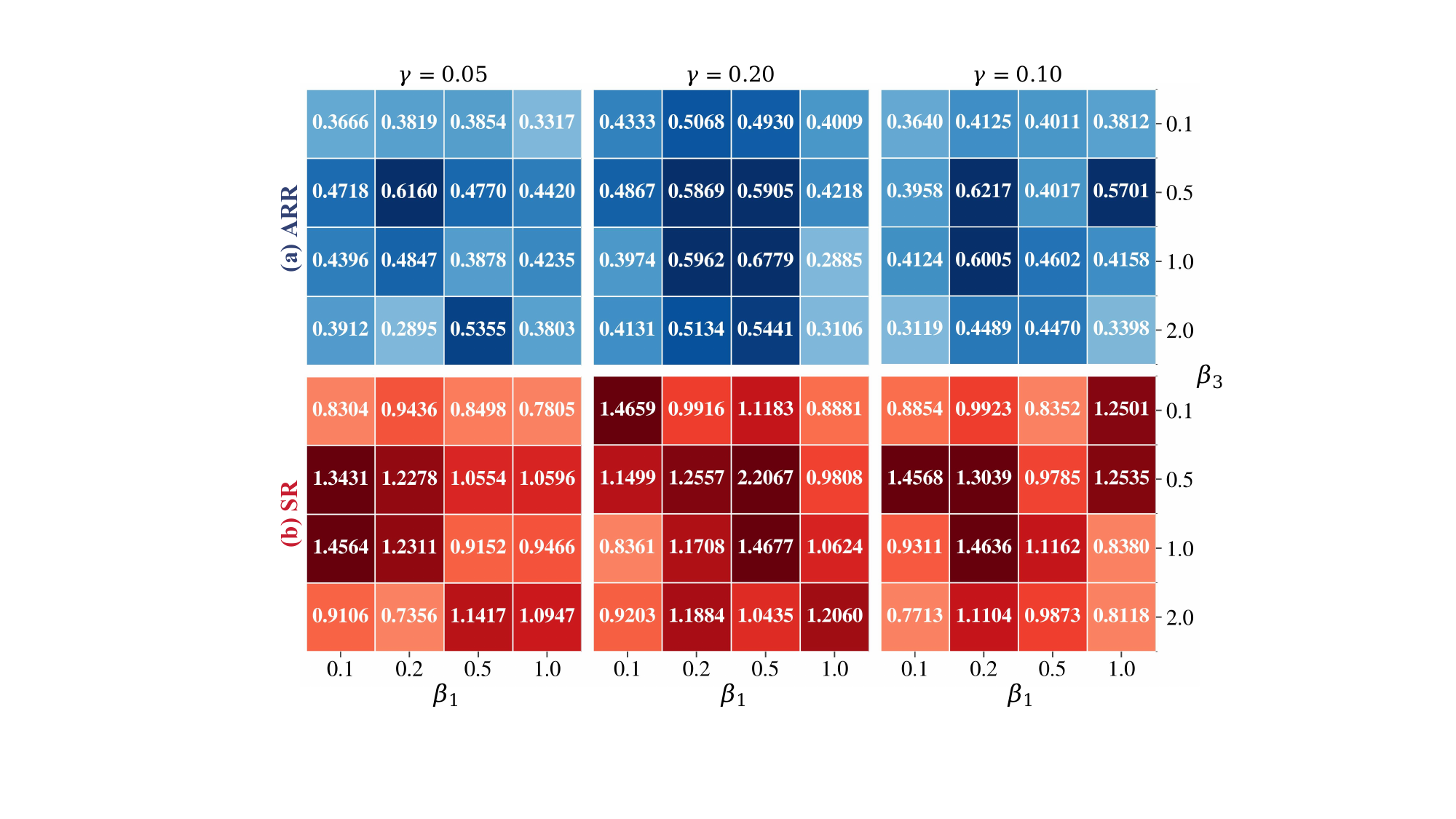}
    \vspace{-6mm}
    \caption{
    Hyperparameter sensitivity analysis of \method{} on AAPL
    in terms of ARR and SR. Corresponding results on BTCUSD are
    provided in Appendix~\ref{app:hyperparameter_analyses}.}
    \Description{Hyperparameter analysis}
    \label{fig:exp_RQ3_aapl}
    \vspace{-4mm}
\end{figure}

\subsection{Performance on More Metrics (RQ4)}
\label{sec:more_metrics}
Figure~\ref{fig:exp_RQ4} extends the evaluation to four additional
metrics: Cumulative Return (CR), Maximum Drawdown (MDD), Sortino
Ratio (SoR), and Calmar Ratio (CalR). These metrics complement ARR
and SR by assessing cumulative profitability, downside risk, and
drawdown-related performance. \method{} achieves favorable results
across these metrics on both AAPL and BTCUSD. On BTCUSD,
\method{} achieves the highest cumulative return, outperforming the
runner-up VTA by 23.0\%. It also limits the maximum drawdown to
6.27\% and obtains a Calmar ratio of 13.50, compared with 7.91 for
the runner-up. These results show that \method{} maintains a favorable
balance between profitability and downside risk across the evaluated
equity and cryptocurrency assets.

\subsection{Hyperparameter Analyses (RQ5)}
\label{sec:hyperparameters}
\highlight{Method-specific hyperparameters.} 
We analyze the sensitivity of \method{} to three method-specific hyperparameters: the singular value maximization weight $\beta_1$, the robustness weight $\beta_3$, and the separability temperature $\gamma$. Figure~\ref{fig:exp_RQ3_aapl} reports the results on AAPL, while the corresponding results on BTCUSD are provided in Appendix~\ref{app:hyperparameter_analyses}. From the AAPL results, we draw three observations.
\textit{\ding{172} \method{} is resilient across a wide span of settings.} Most configurations yield an ARR of roughly 40\%--68\% with SR within a narrow band of 0.9--1.5, indicating that a strong return--risk profile does not hinge on any finely tuned configuration.
\textit{\ding{173} Both loss weights favor intermediate values.} $\beta_1\in\{0.2, 0.5\}$ attains the best mean ARR (50.5\% and 48.3\%), while the extremes drop to about 40\%: a weak term leaves modalities loosely aligned, whereas a dominant one collapses them onto the shared direction at the expense of modality-specific signal. $\beta_3$ follows the same pattern, peaking at $\beta_3\in\{0.5, 1.0\}$ (ARR 50.7\% and 46.5\%; SR 1.27 and 1.12) and receding to roughly 41\% ARR at either extreme, since too little consistency regularization leaves the corroborated signal fragile under modality degradation while too much suppresses informative variation.
\textit{\ding{174} A balanced separability temperature serves best.} $\gamma=0.10$ yields the strongest mean ARR (47.9\%) and SR (1.19), ahead of the sharper $\gamma=0.05$ and softer $\gamma=0.20$, which over-separate or blur sample representations, respectively.
Overall, moderate weights paired with a balanced temperature give the most favorable ARR--SR trade-off, underscoring the practical usability of \method{}.

\noindent\highlight{Look-back window sizes.} We vary the look-back window size on the AAPL and BTCUSD datasets
in Figure~\ref{fig:exp_RQ5}. On AAPL, \method{} remains stable across the three evaluated window sizes, with ARR ranging from 65.49\% to 67.79\% and SR remaining around 1.45. The best performance is obtained at a window size of 14. Several baselines, most notably Kronos, exhibit greater sensitivity to the look-back
window, with its ARR ranging from 12.79\% to 38.38\%. In comparison, \method{} remains the best-performing method at
all three window sizes. On BTCUSD, \method{} also ranks first at every evaluated window size and achieves its best performance with the shortest window,
obtaining an ARR of 84.64\% and an SR of 2.65 at a window size of 10. Its performance then gradually decreases as the window size increases, suggesting that shorter historical windows are more
suitable for BTCUSD during the evaluated period. Overall, these results indicate that \method{} is relatively insensitive to the evaluated look-back window sizes and does not depend on a single carefully tuned window setting on either asset.

%% file: tables/RQ3_ablation_study.tex
\definecolor{oursblue}{HTML}{DDE7FA}
\begin{table}[t]
    \centering
    \caption{Ablation study of \method{} on AAPL and BTCUSD.
    Higher ARR and SR indicate better performance.}
    \vspace{-3mm}
    \label{tab:ablation_modules}

    \resizebox{0.85\linewidth}{!}{
        \begin{tabular}{ccc cc cc}
            \toprule

            \multicolumn{3}{c}{\textbf{Included Modules}}
            & \multicolumn{2}{c}{\textbf{AAPL}}
            & \multicolumn{2}{c}{\textbf{BTCUSD}} \\

            \cmidrule(lr){1-3}
            \cmidrule(lr){4-5}
            \cmidrule(lr){6-7}

            \textbf{SVM} & \textbf{IR} & \textbf{RPL}
            & \textbf{ARR\%$\uparrow$} & \textbf{SR$\uparrow$}
            & \textbf{ARR\%$\uparrow$} & \textbf{SR$\uparrow$} \\

            \midrule

            \cmark & \cmark & \xmark
            & \second{62.83} & 1.29
            & \second{79.81} & 2.14 \\

            \cmark & \xmark & \cmark
            & 60.49 & 1.30
            & 76.23 & \second{2.33} \\

            \xmark & \cmark & \cmark
            & 59.54 & \second{1.38}
            & 74.88 & 2.17 \\

            \rowcolor{oursblue}
            \cmark & \cmark & \cmark
            & \best{67.79} & \best{1.47}
            & \best{84.64} & \best{2.65} \\

            \specialrule{\heavyrulewidth}{0pt}{0pt}
        \end{tabular}
    }
    \vspace{-3mm}
\end{table}

%% file: tex/5conclusion.tex
\section{Conclusion}
\label{sec:conclusion}
In this paper, we proposed \method{}, a framework driven by explicit multimodal corroboration mining and robustness-aware consistency, jointly optimized for financial trading. 
\method{} addresses a limitation of existing approaches that aggregate multimodal information without explicitly characterizing signed cross-modal support at the instance level.
\method{} renews this and emphasizes that 
multimodal financial trading can benefit not only from incorporating complementary information, but also from explicitly identifying task-relevant, non-canceling support across heterogeneous financial modalities.
By concentrating shared structure, preserving signed modality support, and maintaining robustness under modality degradation, \method{} constructs a corroboration-oriented representation for financial prediction. Empirical results across diverse stock and cryptocurrency assets demonstrate that \method{} achieves the SOTA overall trade-off, effectively delivering profitable returns while robustly preserving the risk-adjusted performance.

%% file: tex/6appendix.tex
\section{Notation}
\label{app:notations}
To improve clarity and maintain consistent notation throughout the paper, we provide a notation table as a quick reference (see Table~\ref{tab:notation}) for the definitions of all symbols used.

\section{Theoretical Analysis}
\label{app:theoretical_analysis}
We analyze how CoLAS extracts task-conditioned multimodal
corroboration under joint optimization. In financial markets,
corroboration does not necessarily require unanimous agreement among
all modalities. Instead, it refers to a dominant shared component whose
signed modality contributions provide non-canceling support for the
financial prediction task. A modality may provide opposing evidence,
but such evidence reduces the strength of the shared financial signal
rather than being treated as equally supportive evidence.

For instance $i$, let
$\boldsymbol V_i=[\bar{\boldsymbol v}_i^1,\ldots,\bar{\boldsymbol v}_i^k]\in\mathbb R^{d_c\times k}$
collect the unit-normalized modality representations. Its compact SVD,
fixed spectral energy, leading component, signed modality contributions,
and corroborated signal are jointly defined as
\begin{equation}
\begin{aligned}
\boldsymbol V_i
&=\boldsymbol P_i\boldsymbol\Lambda_i\boldsymbol R_i^\top\\
&=\sum_{j=1}^{k}\lambda_{i,j}\,
\boldsymbol p_{i,j}\boldsymbol r_{i,j}^{\top},
\qquad
\lambda_{i,1}\geq\cdots\geq\lambda_{i,k}\geq0
\end{aligned}
\label{eq:app_svd}
\end{equation}
\begin{equation}
\begin{aligned}
\|\boldsymbol V_i\|_F^2
&=\operatorname{tr}\!\left(\boldsymbol V_i^\top\boldsymbol V_i\right)
=\sum_{m=1}^{k}\|\bar{\boldsymbol v}_i^m\|_2^2\\
&=\sum_{j=1}^{k}\lambda_{i,j}^2
=k
\end{aligned}
\label{eq:app_energy}
\end{equation}
\begin{equation}
\begin{gathered}
\boldsymbol V_i^{(1)}\coloneqq
\lambda_{i,1}\boldsymbol p_{i,1}\boldsymbol r_{i,1}^{\top},
\qquad
a_{i,m}\coloneqq
\boldsymbol p_{i,1}^{\top}\bar{\boldsymbol v}_i^m,\\
\boldsymbol z_i\coloneqq
\frac1k\boldsymbol V_i^{(1)}\boldsymbol 1_k
\end{gathered}
\label{eq:app_defs}
\end{equation}
where $\boldsymbol P_i=[\boldsymbol p_{i,1},\ldots,\boldsymbol p_{i,k}]$,
$\boldsymbol R_i=[\boldsymbol r_{i,1},\ldots,\boldsymbol r_{i,k}]$, and
$\boldsymbol 1_k\in\mathbb R^k$ is the all-one vector.

\begin{proposition}[\textbf{Corroboration}]
\label{prop:corroboration}
\label{prop:spectral_concentration}
The corroborated signal admits the closed form, strength factorization,
and residual-energy characterization
\begin{equation}
\begin{gathered}
\boldsymbol z_i
=\frac1k\!\left(\sum_{m=1}^{k}a_{i,m}\right)\!\boldsymbol p_{i,1},
\qquad
\|\boldsymbol z_i\|_2^2=\rho_i\,\delta_i,\\
\|\boldsymbol V_i-\boldsymbol V_i^{(1)}\|_F^2
=k-\lambda_{i,1}^2
\end{gathered}
\label{eq:app_corroborated_signal}
\end{equation}
where
\begin{equation}
\begin{aligned}
\rho_i
&\coloneqq\frac{\lambda_{i,1}^2}{k}
\in\!\left[\tfrac1k,1\right]
&&\text{(spectral concentration)},\\
\delta_i
&\coloneqq
\frac{\left(\boldsymbol r_{i,1}^{\top}\boldsymbol 1_k\right)^2}{k}
=\cos^2\angle\!\left(\boldsymbol r_{i,1},\boldsymbol 1_k\right)
\in[0,1]
&&\text{(net signed support)}
\end{aligned}
\label{eq:app_strength_factorization}
\end{equation}
Fix $\gamma_1>0$. The feasible spectral set and spectral objective are
\begin{equation}
\begin{aligned}
\Omega\coloneqq\Bigl\{
&(\lambda_{i,1},\ldots,\lambda_{i,k})\in\mathbb R_+^k:\\[-2pt]
&\lambda_{i,1}\geq\max_{j>1}\lambda_{i,j},
\quad
\sum_{j=1}^{k}\lambda_{i,j}^2=k
\Bigr\},
\end{aligned}
\end{equation}
\begin{equation}
\mathcal L_{\mathrm{SVM},i}
=-\log
\frac{\exp(\lambda_{i,1}/\gamma_1)}
{\sum_{j=1}^{k}\exp(\lambda_{i,j}/\gamma_1)}.
\end{equation}
Over $\Omega$, $\mathcal L_{\mathrm{SVM},i}$ has the unique global minimizer
\begin{equation}
(\lambda_{i,1},\lambda_{i,2},\ldots,\lambda_{i,k})
=(\sqrt{k},0,\ldots,0).
\label{eq:app_rank_one_optimum}
\end{equation}
Moreover, transferring spectral energy from any non-leading component
to the leading component strictly decreases
$\mathcal L_{\mathrm{SVM},i}$ and the residual energy
$\sum_{j=2}^{k}\lambda_{i,j}^2$.
\end{proposition}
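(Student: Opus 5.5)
The plan splits into a linear-algebra part (closed form, factorization, residual energy) and an optimization part (minimizer and monotonicity), handled in that order.

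\textbf{Closed form and strength factorization.} Starting from the definition $\boldsymbol z_i=\frac1k\boldsymbol V_i^{(1)}\boldsymbol 1_k=\frac{\lambda_{i,1}}{k}(\boldsymbol r_{i,1}^\top\boldsymbol 1_k)\boldsymbol p_{i,1}$, I would relate the scalar to the signed contributions. From the SVD, $\boldsymbol p_{i,1}^\top\boldsymbol V_i=\lambda_{i,1}\boldsymbol r_{i,1}^\top$, so $a_{i,m}=\lambda_{i,1}r_{i,1,m}$. Summing over $m$ gives $\sum_m a_{i,m}=\lambda_{i,1}\boldsymbol r_{i,1}^\top\boldsymbol 1_k$, which yields the closed form. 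Since $\|\boldsymbol p_{i,1}\|_2=1$, we get
\[
\|\boldsymbol z_i\|_2^2=\lambda_{i,1}^2(\boldsymbol r_{i,1}^\top\boldsymbol 1_k)^2/k^2=\rho_i\delta_i .
\]
The identity $\delta_i=\cos^2\angle(\boldsymbol r_{i,1},\boldsymbol 1_k)$ follows from $\|\boldsymbol r_{i,1}\|=1$ and $\|\boldsymbol 1_k\|^2=k$, and Cauchy--Schwarz gives $\delta_i\in[0,1]$. For the range of $\rho_i$, Eq.~\eqref{eq:app_energy} gives $\lambda_{i,1}^2\le k$, while $\lambda_{i,1}^2\ge\frac1k\sum_j\lambda_{i,j}^2=1$ because $\lambda_{i,1}$ is the largest of $k$ values. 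I would also record the sign invariance: the joint flip $(\boldsymbol p_{i,1},\boldsymbol r_{i,1})\mapsto(-\boldsymbol p_{i,1},-\boldsymbol r_{i,1})$ leaves $\boldsymbol V_i^{(1)}$, and hence $\boldsymbol z_i$, unchanged.

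\textbf{Residual energy.} Write $\boldsymbol V_i-\boldsymbol V_i^{(1)}=\sum_{j\ge2}\lambda_{i,j}\boldsymbol p_{i,j}\boldsymbol r_{i,j}^\top$. By orthonormality its squared Frobenius norm is $\sum_{j\ge2}\lambda_{i,j}^2$, which equals $k-\lambda_{i,1}^2$ by Eq.~\eqref{eq:app_energy}.

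\textbf{Minimizer and monotonicity.} Rewrite the objective as
\[
\mathcal L_{\mathrm{SVM},i}=\log\Bigl(1+\sum_{j\ge2}e^{(\lambda_{i,j}-\lambda_{i,1})/\gamma_1}\Bigr).
\]
On the feasible set $\Omega$ each exponent is $\le0$. For the energy transfer, take $j\ge2$ with $\lambda_{i,j}>0$. Decrease $\lambda_{i,j}^2$ by $\epsilon\in(0,\lambda_{i,j}^2]$ and increase $\lambda_{i,1}^2$ by $\epsilon$; the energy constraint is kept and $\lambda_{i,1}$ remains the largest value. Then $\lambda_{i,1}$ strictly increases and $\lambda_{i,j}$ strictly decreases, with all other values fixed. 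Hence every gap $\lambda_{i,l}-\lambda_{i,1}$ weakly decreases and the $j$-th gap strictly decreases, so the loss strictly decreases because $\exp$ is strictly increasing. The residual energy drops by exactly $\epsilon$.

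For the global minimizer, I would use the lower bound $\mathcal L\ge\log(1+(k-1)e^{-\sqrt k/\gamma_1})$. This holds because $\lambda_{i,1}\le\sqrt k$ and $\lambda_{i,j}\ge0$ give $\lambda_{i,j}-\lambda_{i,1}\ge-\sqrt k$. The bound is attained at $(\sqrt k,0,\dots,0)$. Equality requires every gap to equal $-\sqrt k$, which forces $\lambda_{i,1}=\sqrt k$ and all $\lambda_{i,j}=0$, so the minimizer is unique.

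\textbf{Expected obstacle.} The main delicacy is the monotonicity claim. One must make sure the transfer stays inside $\Omega$, including the ordering and nonnegativity constraints, and handle possible ties $\lambda_{i,1}=\lambda_{i,j}$. Ties pose no problem, since a transfer only enlarges the leading value. The direct bound above avoids any need for convexity arguments in the uniqueness part.
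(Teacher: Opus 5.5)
Your proof is correct. The linear-algebra half matches the paper's proof step for step: the closed form via $\boldsymbol V_i^\top\boldsymbol p_{i,1}=\lambda_{i,1}\boldsymbol r_{i,1}$, the factorization $\|\boldsymbol z_i\|_2^2=\rho_i\delta_i$, the ranges of $\rho_i$ and $\delta_i$, and the residual energy from orthonormality.

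The optimization half takes a different route. The paper parametrizes an energy-preserving path $\lambda_{i,1}(\tau)=\sqrt{\lambda_{i,1}^2+\tau}$, $\lambda_{i,j}(\tau)=\sqrt{\lambda_{i,j}^2-\tau}$ for $0<\tau<\lambda_{i,j}^2$. It differentiates using the softmax gradients $(q_{i,1}-1)/\gamma_1$ and $q_{i,j}/\gamma_1$ to get $\mathrm d\mathcal L_{\mathrm{SVM},i}/\mathrm d\tau<0$. Uniqueness then follows by exclusion: any feasible point with a positive non-leading value admits a strict descent, so only $(\sqrt k,0,\ldots,0)$ can be optimal.

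You instead rewrite the loss as $\log\bigl(1+\sum_{j\ge2}e^{(\lambda_{i,j}-\lambda_{i,1})/\gamma_1}\bigr)$. You get monotonicity from the fact that every gap shrinks under a transfer. This needs no calculus, and it also covers the complete transfer $\epsilon=\lambda_{i,j}^2$, which the paper's open interval for $\tau$ excludes. You then get the minimizer from a tight lower bound together with an equality characterization.

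Your bound buys some rigor. The paper's exclusion argument only shows that a minimizer, if one exists, must be $(\sqrt k,0,\ldots,0)$; existence is left implicit (it needs compactness of $\Omega$ and continuity of the loss). Your bound proves existence, uniqueness, and the optimal value $\log\bigl(1+(k-1)e^{-\sqrt k/\gamma_1}\bigr)$ in one step. The paper's calculus route, in turn, produces the explicit partial derivatives that it reuses in the local gradient analysis of Proposition~\ref{prop:gradient}.
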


\begin{proof}
By the Eckart--Young theorem, orthonormality of
$\boldsymbol P_i,\boldsymbol R_i$, Eq.~\eqref{eq:app_energy},
$\|\boldsymbol p_{i,1}\|_2=\|\boldsymbol r_{i,1}\|_2=1$, and
$\boldsymbol V_i^\top\boldsymbol p_{i,1}
=\lambda_{i,1}\boldsymbol r_{i,1}$,
\begin{equation}
\begin{aligned}
\|\boldsymbol V_i-\boldsymbol V_i^{(1)}\|_F^2
&=\sum_{j=2}^{k}\lambda_{i,j}^2
=\sum_{j=1}^{k}\lambda_{i,j}^2-\lambda_{i,1}^2\\
&=k-\lambda_{i,1}^2,
\end{aligned}
\end{equation}
\begin{equation}
\begin{aligned}
\boldsymbol z_i
&=\frac1k\boldsymbol V_i^{(1)}\boldsymbol 1_k
=\frac{\lambda_{i,1}}{k}
\boldsymbol p_{i,1}\boldsymbol r_{i,1}^{\top}\boldsymbol 1_k\\
&=\frac{\lambda_{i,1}
\left(\boldsymbol r_{i,1}^{\top}\boldsymbol 1_k\right)}{k}
\boldsymbol p_{i,1}
=\frac1k\!\left(\sum_{m=1}^{k}a_{i,m}\right)\!\boldsymbol p_{i,1},
\end{aligned}
\end{equation}
\begin{equation}
\begin{aligned}
\|\boldsymbol z_i\|_2^2
&=\frac{\lambda_{i,1}^2
\left(\boldsymbol r_{i,1}^{\top}\boldsymbol 1_k\right)^2}{k^2}\\
&=\underbrace{\frac{\lambda_{i,1}^2}{k}}_{\rho_i}
\cdot
\underbrace{
\frac{\left(\boldsymbol r_{i,1}^{\top}\boldsymbol 1_k\right)^2}{k}
}_{\delta_i},
\end{aligned}
\end{equation}
\begin{equation}
\begin{aligned}
\sum_{m=1}^{k}a_{i,m}
&=\boldsymbol 1_k^\top\boldsymbol V_i^\top\boldsymbol p_{i,1}
=\lambda_{i,1}\boldsymbol r_{i,1}^{\top}\boldsymbol 1_k,\\
\delta_i
&=\frac{\left(\boldsymbol r_{i,1}^{\top}\boldsymbol 1_k\right)^2}
{\|\boldsymbol r_{i,1}\|_2^2\|\boldsymbol 1_k\|_2^2}
\in[0,1],\\
1\leq\lambda_{i,1}^2
&\leq\sum_{j=1}^{k}\lambda_{i,j}^2=k
\quad\Longrightarrow\quad
\rho_i\in\!\left[\tfrac1k,1\right].
\end{aligned}
\end{equation}

Suppose $\lambda_{i,j}>0$ for some $j>1$. For
$0<\tau<\lambda_{i,j}^2$, define the energy-preserving path and the
corresponding softmax probabilities as
\begin{equation}
\begin{gathered}
\lambda_{i,1}(\tau)=\sqrt{\lambda_{i,1}^2+\tau},
\qquad
\lambda_{i,j}(\tau)=\sqrt{\lambda_{i,j}^2-\tau},\\
\sum_{\ell=1}^{k}\lambda_{i,\ell}(\tau)^2=k
\end{gathered}
\end{equation}
\begin{equation}
q_{i,\ell}(\tau)
=\frac{\exp(\lambda_{i,\ell}(\tau)/\gamma_1)}
{\sum_{r=1}^{k}\exp(\lambda_{i,r}(\tau)/\gamma_1)}.
\end{equation}
\begin{equation}
\begin{aligned}
\frac{\partial\mathcal L_{\mathrm{SVM},i}}
{\partial\lambda_{i,1}}
&=\frac{q_{i,1}-1}{\gamma_1},
&
\frac{\partial\mathcal L_{\mathrm{SVM},i}}
{\partial\lambda_{i,j}}
&=\frac{q_{i,j}}{\gamma_1},\\[2pt]
\frac{\mathrm d\mathcal L_{\mathrm{SVM},i}}{\mathrm d\tau}
&=\frac{q_{i,1}(\tau)-1}
{2\gamma_1\lambda_{i,1}(\tau)}
-\frac{q_{i,j}(\tau)}
{2\gamma_1\lambda_{i,j}(\tau)}
<0,
\end{aligned}
\end{equation}
\begin{equation}
\sum_{\ell=2}^{k}\lambda_{i,\ell}(\tau)^2
=\sum_{\ell=2}^{k}\lambda_{i,\ell}^2-\tau.
\end{equation}
The ordering of the non-leading singular values may be restored by
permutation, which does not change the objective. Therefore, any feasible
spectrum with a positive non-leading singular value cannot be optimal.
At the optimum,
$\lambda_{i,j}=0$ for every $j>1$, and the fixed-energy condition gives
$\lambda_{i,1}=\sqrt{k}$, proving the uniqueness of
Eq.~\eqref{eq:app_rank_one_optimum}.
\end{proof}

\noindent\textbf{\textit{Remark.}}
$\boldsymbol z_i$ is invariant under
$(\boldsymbol p_{i,1},\boldsymbol r_{i,1})
\mapsto(-\boldsymbol p_{i,1},-\boldsymbol r_{i,1})$
and equals $\operatorname{proj}_{\boldsymbol p_{i,1}}(\bar{\boldsymbol v}_i)$ for the modality
mean $\bar{\boldsymbol v}_i=\frac1k\sum_{m}\bar{\boldsymbol v}_i^m$. Spectral concentration
alone does not imply corroboration: under rank-1 structure ($\rho_i=1$),
$[\boldsymbol u,\boldsymbol u,\boldsymbol u,-\boldsymbol u]
\Rightarrow\boldsymbol z_i=\tfrac12\boldsymbol u$
while
$[\boldsymbol u,\boldsymbol u,-\boldsymbol u,-\boldsymbol u]
\Rightarrow\boldsymbol z_i=\boldsymbol 0$,
so partial conflict is admitted ($\delta_i>0$) but balanced opposition ($\delta_i=0$)
yields no corroborated signal. The proposition characterizes the inductive preference of
the spectral objective over the feasible singular-value space; it does not claim that
finite-step optimization must attain an exactly rank-1 representation. Instead, the
objective favors a larger leading-energy proportion
$\rho_i=\lambda_{i,1}^2/k$ and a smaller residual
$\|\mathbf V_i-\mathbf V_i^{(1)}\|_F^2$. The distinction of CoLAS does not arise from
rank-1 projection alone, but from jointly learning the per-instance spectral concentration,
retaining signed cross-modal cancellation, aligning the net support with the prediction task,
and enforcing consistency under modality degradation.

\input{tables/notation}

\noindent\begin{proposition}[\textbf{Local gradient contributions under joint optimization}]
\label{prop:gradient}
Consider the joint objective
$\mathcal L_{\mathrm{CoLAS}}
=\mathcal L_{\mathrm{cls}}
+\beta_1\mathcal L_{\mathrm{SVM}}
+\beta_2\mathcal L_{\mathrm{IR}}
+\beta_3\mathcal L_{\mathrm{rob}}$,
where $\beta_1>0$. We isolate the local gradient contributions of
$\beta_1\mathcal L_{\mathrm{SVM}}$ and
$\mathcal L_{\mathrm{cls}}$ within this objective.

For instance $i$, define
\begin{equation}
\begin{aligned}
q_{i,j}
&\coloneqq
\frac{\exp(\lambda_{i,j}/\gamma_1)}
{\sum_{\ell=1}^{k}\exp(\lambda_{i,\ell}/\gamma_1)},
&
\mathcal L_{\mathrm{SVM},i}
&\coloneqq-\log q_{i,1},
\end{aligned}
\label{eq:app_svm_probability}
\end{equation}
\begin{equation}
\begin{aligned}
t_i&\coloneqq2y_i-1,
&
s_i&\coloneqq\sum_{m=1}^{k}a_{i,m},\\
\ell_i&\coloneqq
\frac{s_i}{k}\mathbf w_i^\top\mathbf p_{i,1}+b_i
\end{aligned}
\label{eq:app_local_logit}
\end{equation}
\begin{equation}
\mathcal L_{\mathrm{cls},i}
\coloneqq\log\!\left(1+\exp(-t_i\ell_i)\right).
\label{eq:app_local_cls}
\end{equation}
\begin{equation}
\label{eq:app_task_aligned_support}
A_i
\coloneqq t_i\!\left(\mathbf w_i^\top\mathbf p_{i,1}\right)s_i.
\end{equation}
Here, the prediction head is locally linearized around $\mathbf z_i$ as
$g(\mathbf z)\approx\mathbf w_i^\top\mathbf z+b_i$. The weighted spectral term
and the classification term satisfy
\begin{equation}
\begin{aligned}
\frac{\partial\!\left(\beta_1\mathcal L_{\mathrm{SVM},i}\right)}
{\partial\lambda_{i,1}}
&=\frac{\beta_1(q_{i,1}-1)}{\gamma_1}<0,\\
\frac{\partial\!\left(\beta_1\mathcal L_{\mathrm{SVM},i}\right)}
{\partial\lambda_{i,j}}
&=\frac{\beta_1q_{i,j}}{\gamma_1}>0,
\qquad j>1,
\end{aligned}
\label{eq:app_svm_gradient}
\end{equation}
\begin{equation}
\frac{\partial\mathcal L_{\mathrm{cls},i}}{\partial a_{i,m}}
=-\frac{t_i\sigma(-t_i\ell_i)}{k}
\mathbf w_i^\top\mathbf p_{i,1}.
\label{eq:app_cls_gradient}
\end{equation}
Holding $\mathbf p_{i,1}$ and $\mathbf w_i$ fixed and treating
$\{a_{i,m}\}_{m=1}^{k}$ as local coordinates, a local gradient step of size
$\eta>0$ induced only by $\mathcal L_{\mathrm{cls},i}$ satisfies
\begin{equation}
\begin{aligned}
A_i^{+}-A_i
&=\eta\,\sigma(-t_i\ell_i)
\left(\mathbf w_i^\top\mathbf p_{i,1}\right)^2\\
&\geq0,
\end{aligned}
\label{eq:app_task_support_update}
\end{equation}
with strict inequality whenever
$\mathbf w_i^\top\mathbf p_{i,1}\neq0$.
\end{proposition}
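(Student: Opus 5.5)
The proof splits into three independent calculations: the spectral gradient of the softmax loss, the classification gradient in the local coordinates $a_{i,m}$, and the first-order change of the task-aligned support $A_i$ after one gradient step.

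\emph{Spectral gradients (Eq.~\eqref{eq:app_svm_gradient}).} Write
\[
\mathcal L_{\mathrm{SVM},i}=-\lambda_{i,1}/\gamma_1+\log\sum_\ell\exp(\lambda_{i,\ell}/\gamma_1).
\]
The derivative of the log-sum-exp with respect to $\lambda_{i,j}$ is $q_{i,j}/\gamma_1$. This gives $(q_{i,1}-1)/\gamma_1$ for $j=1$ and $q_{i,j}/\gamma_1$ for $j>1$, exactly as already computed in the proof of Proposition~\ref{prop:corroboration}. Multiplying by $\beta_1>0$ preserves the signs. Strictness holds because every softmax probability lies in $(0,1)$ when $k\ge2$: this gives $q_{i,1}<1$ and $q_{i,j}>0$.

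\emph{Classification gradient (Eq.~\eqref{eq:app_cls_gradient}).} Hold $\mathbf p_{i,1}$, $\mathbf w_i$ and $b_i$ fixed. Then $\ell_i$ is affine in $s_i=\sum_m a_{i,m}$, with $\partial\ell_i/\partial a_{i,m}=\mathbf w_i^\top\mathbf p_{i,1}/k$ for every $m$. For the logistic loss, $\partial\mathcal L_{\mathrm{cls},i}/\partial\ell_i=-t_i\sigma(-t_i\ell_i)$. The chain rule then gives the stated expression, which is the same for all $m$.

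\emph{Task-support update (Eq.~\eqref{eq:app_task_support_update}).} This is the step needing care. A gradient step gives
\[
a_{i,m}^+=a_{i,m}+\eta\,t_i\sigma(-t_i\ell_i)\,\mathbf w_i^\top\mathbf p_{i,1}/k
\]
for each of the $k$ coordinates. Summing over $m$, the $k$ identical increments combine to
\[
s_i^+-s_i=\eta\,t_i\sigma(-t_i\ell_i)\,\mathbf w_i^\top\mathbf p_{i,1}.
\]
Because $\mathbf w_i$ and $\mathbf p_{i,1}$ are frozen, $A_i$ is linear in $s_i$. Hence
\[
A_i^+-A_i=t_i(\mathbf w_i^\top\mathbf p_{i,1})(s_i^+-s_i)=\eta\,t_i^2\sigma(-t_i\ell_i)(\mathbf w_i^\top\mathbf p_{i,1})^2.
\]
Since $t_i\in\{\pm1\}$, we have $t_i^2=1$, which yields the stated identity. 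The main obstacle is bookkeeping the factor of $k$: the per-coordinate $1/k$ is cancelled by summing $k$ equal increments. One must also keep the update exact rather than first-order, which holds because $A_i$ is linear in the $a_{i,m}$ once $\mathbf p_{i,1}$ and $\mathbf w_i$ are held fixed.

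Nonnegativity then follows since $\eta>0$, the sigmoid is positive, and the last factor is a square. Strictness holds whenever $\mathbf w_i^\top\mathbf p_{i,1}\neq0$, because $\sigma(\cdot)>0$ everywhere. Finally, these are statements in local coordinates: the constraints linking the $a_{i,m}$ to $\mathbf V_i$ and its SVD are deliberately ignored, as the phrase ``holding fixed'' in the statement permits.
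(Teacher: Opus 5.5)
Your proposal is correct and follows the paper's proof step for step: you differentiate the log-sum-exp for the spectral term, apply the chain rule through the locally linearized logit to get $\partial\mathcal L_{\mathrm{cls},i}/\partial a_{i,m}$, and sum the $k$ identical increments to obtain $s_i^{+}-s_i$ before using the linearity of $A_i$ in $s_i$ together with $t_i^2=1$. Your explicit remark that $q_{i,1}<1$ requires $k\ge 2$ makes precise an assumption the paper leaves implicit.
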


\begin{proof}
Writing the weighted spectral term explicitly and differentiating gives
\begin{equation}
\begin{aligned}
\beta_1\mathcal L_{\mathrm{SVM},i}
&=-\frac{\beta_1\lambda_{i,1}}{\gamma_1}
+\beta_1\log\sum_{\ell=1}^{k}
\exp(\lambda_{i,\ell}/\gamma_1),\\[2pt]
\frac{\partial\!\left(\beta_1\mathcal L_{\mathrm{SVM},i}\right)}
{\partial\lambda_{i,1}}
&=-\frac{\beta_1}{\gamma_1}
+\frac{\beta_1q_{i,1}}{\gamma_1}
=\frac{\beta_1(q_{i,1}-1)}{\gamma_1}<0,\\[2pt]
\frac{\partial\!\left(\beta_1\mathcal L_{\mathrm{SVM},i}\right)}
{\partial\lambda_{i,j}}
&=\frac{\beta_1q_{i,j}}{\gamma_1}>0,
\qquad j>1,
\end{aligned}
\end{equation}
where $0<q_{i,1}<1$ and $q_{i,j}>0$.

For the classification term, the chain rule and the induced local update yield
\begin{equation}
\begin{aligned}
\frac{\partial\mathcal L_{\mathrm{cls},i}}{\partial\ell_i}
&=-t_i\sigma(-t_i\ell_i),
&
\frac{\partial\ell_i}{\partial a_{i,m}}
&=\frac1k\mathbf w_i^\top\mathbf p_{i,1},\\[2pt]
\frac{\partial\mathcal L_{\mathrm{cls},i}}{\partial a_{i,m}}
&=-\frac{t_i\sigma(-t_i\ell_i)}{k}
\mathbf w_i^\top\mathbf p_{i,1},
\end{aligned}
\end{equation}
\begin{equation}
\begin{aligned}
s_i^{+}
&=\sum_{m=1}^{k}\left(
 a_{i,m}-\eta
 \frac{\partial\mathcal L_{\mathrm{cls},i}}{\partial a_{i,m}}
 \right)\\
&=s_i+\eta\,t_i\sigma(-t_i\ell_i)
\mathbf w_i^\top\mathbf p_{i,1},\\[2pt]
A_i^{+}-A_i
&=t_i\!\left(\mathbf w_i^\top\mathbf p_{i,1}\right)
\!\left(s_i^{+}-s_i\right)\\
&=\eta\,t_i^2\sigma(-t_i\ell_i)
\left(\mathbf w_i^\top\mathbf p_{i,1}\right)^2\\
&=\eta\,\sigma(-t_i\ell_i)
\left(\mathbf w_i^\top\mathbf p_{i,1}\right)^2
\geq0,
\end{aligned}
\end{equation}
where the last equality uses $t_i^2=1$.
\end{proof}

\noindent\textbf{\textit{Remark.}}
The proposition isolates two complementary local gradient contributions
within the joint objective. The weighted spectral term
$\beta_1\mathcal L_{\mathrm{SVM}}$ favors a more prominent leading
spectral component, whereas the classification term supplies a
non-negative task-alignment update for the aggregate signed support
$A_i$. This update does not require every individual modality
contribution $a_{i,m}$ to have the same sign.

Moreover, $A_i$ is invariant under the joint SVD sign transformation
$(\mathbf p_{i,1},\mathbf r_{i,1})
\mapsto
(-\mathbf p_{i,1},-\mathbf r_{i,1})$, because both
$\mathbf w_i^\top\mathbf p_{i,1}$ and
$s_i=\sum_m a_{i,m}$ change sign simultaneously. The complete parameter
update additionally contains the gradients of
$\mathcal L_{\mathrm{IR}}$ and $\mathcal L_{\mathrm{rob}}$ and updates the
encoders, alignment maps, singular directions, and prediction head
jointly. Therefore, the analysis characterizes the local inductive
contributions of the two examined objectives rather than guaranteeing
that every complete joint update monotonically increases
$\lambda_{i,1}$ or $A_i$, or establishing global convergence.

\section{Implementation Details}
\label{app:implementation}

\subsection{Algorithm Flow}
\label{app:algorithm_flow}
To facilitate reproducibility and provide a clearer view of the overall procedure, we summarize the workflow of \method{} in Algorithm~\ref{alg:colas}. Each training iteration proceeds in four steps. Step~\ding{192} encodes every modality of a sampled instance and maps it to the alignment space, where the normalized vectors are stacked into the per-instance matrix $\mathbf{V}_i$. Step~\ding{193} applies SVD to $\mathbf{V}_i$, extracts its leading rank-1 component, and constructs the task-conditioned corroborated signal $\mathbf{z}_i$ from the signed
modality contributions retained in this component. It then computes the corroboration mining losses $\mathcal L_{\mathrm{SVM}}$ and
$\mathcal L_{\mathrm{IR}}$. Step~\ding{194} randomly perturbs one modality to form a corrupted view, recomputes $\tilde{\mathbf{z}}_i$ and the two logits, and yields the robustness loss $\mathcal{L}_{\mathrm{rob}}$ together with the classification loss $\mathcal{L}_{\mathrm{cls}}$. Step~\ding{195} combines the four losses and updates the parameters by gradient descent.

\begin{algorithm}[t]
\caption{\method{} Training}
\label{alg:colas}
\begin{algorithmic}[1]
\Require Dataset $\mathcal{X}=\{(\{\mathbf{X}^{m}_i\}_{m\in\mathcal{M}},\,y_i)\}_{i=1}^{N}$, $k=|\mathcal{M}|$ modalities;
         encoders $f_m(\cdot;\boldsymbol{\theta}_m)$, alignment maps $\psi_m$, head $g$;
         weights $\beta_1,\beta_2,\beta_3$, noise level $\sigma_\epsilon$.
\Ensure Optimized parameters $\Theta$.
\For{each training iteration}
  \State Sample a batch $\{(\{\mathbf{X}^{m}_i\}_{m\in\mathcal{M}},\,y_i)\}_{i=1}^{B}$.
  \Statex \textbf{\ding{192} Encode and align} ($\forall m\in\mathcal{M}$):
  \State $\bar{\mathbf{v}}^{m}_i=\psi_m\!\big(f_m(\mathbf{X}^{m}_i;\boldsymbol{\theta}_m)\big)$, $\ell_2$-normalized.
  \State Stack $\mathbf{V}_i=[\bar{\mathbf{v}}^{m_1}_i,\ldots,\bar{\mathbf{v}}^{m_k}_i]\in\mathbb{R}^{d_c\times k}$.
  \Statex \textbf{\ding{193} Mine corroboration:}
  \State SVD $\mathbf{V}_i=\mathbf{P}_i\mathbf{\boldsymbol{\Lambda}}_i\mathbf{R}_i^{\top}$; leading direction $\mathbf{z}_i=\frac{1}{k}\left(\sum_{m=1}^{k}a_{i,m}\right)\mathbf{p}_{i,1}$.
  \State Losses: $\mathcal{L}_{\mathrm{SVM}}$ (Eq.~\ref{eq:svm_loss}), $\mathcal{L}_{\mathrm{IR}}$ (Eq.~\ref{eq:ir_loss}).
  \Statex \textbf{\ding{194} Enforce robustness:}
  \State Perturb one modality $a$: $\widetilde{\mathbf{X}}^{a}_i=\Pi(\mathbf{X}^{a}_i)$, others unchanged.
  \State Recompute $\tilde{\mathbf{z}}_i$ via \ding{192}--\ding{193}; logits $\ell_i=g(\mathbf{z}_i)$, $\tilde{\ell}_i=g(\tilde{\mathbf{z}}_i)$.
  \State Losses: $\mathcal{L}_{\mathrm{rob}}$ (Eq.~\ref{eq:rob}), $\mathcal{L}_{\mathrm{cls}}=\tfrac{1}{B}\sum_i\mathrm{BCE}(\ell_i,y_i)$.
  \Statex \textbf{\ding{195} Update:}
  \State $\mathcal{L}=\mathcal{L}_{\mathrm{cls}}+\beta_1\mathcal{L}_{\mathrm{SVM}}+\beta_2\mathcal{L}_{\mathrm{IR}}+\beta_3\mathcal{L}_{\mathrm{rob}}$;\quad $\Theta\leftarrow\Theta-\eta\nabla_{\Theta}\mathcal{L}$.
\EndFor
\State \Return $\Theta$.
\end{algorithmic}
\end{algorithm}

\subsection{Datasets and Processing}
\label{app:datasets}
We construct six primary asset-specific datasets (see Table~\ref{tab:dataset_statistics}), including five
widely traded U.S. equities and one cryptocurrency. For each asset,
we collect four forms of financial information, namely market,
technical, news, and sentiment data. Although these modalities may
share upstream information, they differ in their information content,
construction procedures, update frequencies, and noise
characteristics. All modalities are aligned at the asset-day level according to their
observation or release timestamps, with only information available
before the prediction cutoff on day $t$ used to predict the asset
movement on day $t+1$. We describe the assets and the construction of each
modality below:

\paragraph{Asset.}
The primary asset set comprises Apple Inc. (AAPL, Information
Technology, mega-cap), Amazon.com Inc. (AMZN, Consumer
Discretionary, mega-cap), Alphabet Inc. (GOOG, Communication
Services, mega-cap), Microsoft Corporation (MSFT, Information
Technology, mega-cap), Tesla Inc. (TSLA, Consumer Discretionary,
mega-cap), and Bitcoin (BTCUSD, Cryptocurrency). We additionally
include Coca-Cola Company (KO, Consumer Defensive, mega-cap),
Johnson \& Johnson (JNJ, Healthcare, mega-cap), United Parcel
Service (UPS, Industrials, large-cap), Upwork (UPWK, Communication
Services, small-cap), and Sprouts Farmers Market (SFM, Consumer
Defensive, mid-cap). The additional equities cover different sectors
and firm-size groups, thereby broadening the empirical asset coverage
beyond the predominantly mega-cap companies in the primary set.

The primary datasets are divided chronologically into non-overla\-pping
training (2023-10-01--2024-09-30), validation
(2024-10-01--2025-03-31), and testing
(2025-04-01--2025-09-30) periods. We further construct an
extended-horizon setting with the same training and validation
periods and a test period spanning 2025-04-01--2026-03-20. This
setting includes the five additional equities and four selected assets
from the primary set.

\input{tables/news_coverage}

\paragraph{Market.}
We collect daily market observations from Yahoo Finance. For each
asset and trading day, the market modality contains the open price,
high price, low price, adjusted close price, and trading volume. The
adjusted close accounts for applicable corporate actions, whereas
the remaining fields retain the values reported by the data provider.
Each numerical feature is standardized using statistics estimated
from the training period, and the same transformation is applied to
the validation and test periods.

\paragraph{Technical.}
The technical modality contains indicators computed from historical
market observations and valuation variables retrieved from Finnhub. The market-based
indicators characterize historical trend, momentum, and volatility,
including the relative strength index (RSI), moving average
convergence divergence (MAC\-D), rate of change (ROC), simple and
exponential moving averages (SMA and EMA), Bollinger Bands,
average true range (ATR), and historical volatility. The valuation
variables include the price-to-earnings ratio (P/E) and
price-to-book ratio (P/B). Although the market-based indicators are
derived from historical price and volume observations, the valuation
variables contain firm-level information that cannot be directly
recovered from daily OHLCV data. The valuation variables are used in a point-in-time manner. For each trading day, we retain only the latest value whose publication or effective timestamp is no later than the corresponding information cutoff. No current valuation snapshot is backfilled into earlier dates.

For both the market and technical modalities, we use a two-layer LSTM encoder with a hidden dimension of 768 and a dropout rate of 0.1.

\input{tables/news_prompt}

\paragraph{News.}
We retrieve time-stamped, asset-related news from the Alpaca API.
The collected articles cover company-specific developments,
material corporate events, product announcements, earnings-related
information, and relevant policy or macroeconomic developments.
For each asset and date, the available articles are ordered by their
release timestamps and organized into a fixed news-only prompt, as
shown in Table~\ref{tab:news_agent_prompt_example}. Each prompt is
encoded using Qwen2.5-7B-Instruct, released in September 2024 with
a knowledge cutoff in June 2024.

\input{tables/sentiment_prompt}

\paragraph{Sentiment.}
Sentiment-related information is collected from the Alpha Vantage
API. This modality reflects investor reactions, analyst views, and
market opinions concerning the target asset. Observations associated
with the same asset and date are grouped and organized into a fixed
sentiment-only prompt, as shown in Table~\ref{tab:sentiment_agent_prompt_example}, which is encoded using DeepSeek-R1, released
in January 2025 with a knowledge cutoff in July 2024. In contrast to
the news modality, which describes reported events and factual
developments, the sentiment modality characterizes how market
participants interpret or react to the available information.

For both modalities, we extract the hidden states from the last
transformer layer and use the hidden state corresponding to the final
non-padding token as the daily representation. These representations
are computed once offline before downstream model training and remain
fixed across all experimental runs. Within each modality, the same
prompt template is used for all assets and dates. The prompts contain
only the corresponding modality-specific records available by the
given date and exclude other modalities, movement labels, future
information, validation results, and test-period outcomes.

\paragraph{Temporal Alignment and Information Availability.}
All timestamped textual records are converted to the asset-specific reference timezone before temporal aggregation.
We use Eastern Time for U.S. equities and UTC for BTCUSD.
For U.S. equities, news and sentiment records released after the official market close or on a non-trading day are assigned to the next trading day.
For BTCUSD, records are grouped according to UTC calendar days.
For a sample indexed by trading day $t$, CoLAS uses only modality observations assigned to days $t-T, \dots, t-1$.
Specifically, all records assigned to day $t$, including pre-market and intraday textual information, are excluded.
Market-derived technical indicators are computed recursively using price and volume observations no later than day $t-1$.
This conservative alignment ensures that every input record is available before the corresponding trading decision is made.

\subsection{Baselines}
\label{app:baselines}
We comprehensively evaluate \method{} against 16 diverse baselines categorized into Rule-based Strategies, Single-modal Models, Multimodal General LLMs, and Multimodal Financial LLMs. First, the Rule-based Strategies include traditional quantitative methods: \textbf{Buy-and-Hold (B\&H)}, \textbf{MACD}, \textbf{ZMR}, and \textbf{SMA}. Second, the Single-modal Models include \textbf{LSTM}, \textbf{Transformer}, \textbf{DQN}, \textbf{PPO}, and \textbf{Kronos}. Third, the Multimodal General LLMs include \textbf{Qwen3-8B}, \textbf{DeepSeek-R1-0528}, and \textbf{Llama4-Scout-17B}. Furthermore, the Multimodal Financial LLMs include \textbf{FinAgent}, \textbf{TradingAgents}, \textbf{DeepFund}, and \textbf{VTA}. Below we briefly describe each baseline:

\begin{itemize}[leftmargin=*]
    \item \noindent\textbf{Rule-based Strategies.} We include traditional quantitative strategies. \textbf{Buy-and-Hold (B\&H)} holds the asset throughout the evaluation horizon, ignoring short-term fluctuations. \textbf{MACD} generates signals from MACD--signal line crossovers to capture trend momentum. \textbf{Z-score Mean Reversion (ZMR)} assumes prices revert to a statistical mean, with entries and exits determined by Z-score thresholds. \textbf{SMA} trades on moving-average levels and crossovers to reflect smoothed trend direction.
    
    \item \noindent\textbf{Single-modal Models.} These models rely on price and technical signals alone.
    \textbf{LSTM}~\cite{yang2020qlib} models sequential dependencies with gated memory for next-step price forecasting.
    \textbf{Transformer}~\cite{yang2020qlib} leverages self-attention to capture long-range temporal interactions for price prediction.
    \textbf{DQN}~\cite{mnih2013playing} approximates the action-value function with deep networks to select trades from market states.
    \textbf{PPO}~\cite{schulman2017proximal} optimizes a clipped surrogate objective to update policies stably and sample-efficiently.
    \textbf{Kronos}~\cite{shi2026kronos} is a decoder-only foundation model pretrained on large-scale candlestick (OHLCV) data from global exchanges, forecasting future price movements autoregressively.

    \item \noindent\textbf{Multimodal General LLMs.} We include strong general-purpose LLMs as text-driven trading agents.
    \textbf{Qwen3-8B}~\cite{yang2025qwen3}, \textbf{DeepSeek-R1-0528}~\cite{guo2025deepseek,huang2025explainable}, and \textbf{Llama4-Scout-17B}~\cite{meta2025llama4} are prompted to infer directional signals from both textual and numeric inputs (e.g., news summaries and recent prices) and to produce trading decisions under a unified prompting and execution protocol. We use the exact released checkpoints specified in our code without task-specific fine-tuning. All three models receive the same prompt template and the same temporally aligned textual and numerical inputs, including news summaries, sentiment summaries, recent market observations, and technical indicators. We use a temperature of 0, a top-p value of 1.0. Each model is instructed to return one final decision, either BUY or SELL.

    \item \noindent\textbf{Multimodal Financial LLMs.} We compare against multimodal methods specialized for financial trading (introduced in chronological order).
    \textbf{FinAgent}~\cite{zhang2024multimodal} is a finance-oriented LLM agent that performs market reasoning and decision making with domain-specific instructions and tools.
    \textbf{TradingAgents}~\cite{xiao2024tradingagents} is a multi-agent LLM framework for automated trading decision making.
    \textbf{DeepFund}~\cite{li2026time} is a finance-focused LLM agent that integrates market information and textual evidence for fund-style trading decisions.
    \textbf{VTA}~\cite{koa2025reasoning} translates historical price series into verbal technical analysis and optimizes the natural-language reasoning with an inverse-MSE reward, then conditions a time-series backbone on this reasoning to produce interpretable forecasts. All multimodal baselines are constructed from the same temporally aligned source records and are evaluated using the same test period, transaction costs, and long-or-flat Qlib protocol. Hyperparameters that require model selection are selected exclusively on the validation period and are fixed before testing.
\end{itemize}

\subsection{Evaluation Metrics}
\label{app:evaluation_metrics}

We evaluate trading performance using six financial metrics: Annual Rate of Return (ARR), Cumulative Return (CR), Sharpe Ratio (SR), Maximum Drawdown (MDD), Calmar Ratio (CalR), and Sortino Ratio (SoR). ARR and CR measure profitability, SR, CalR, and SoR measure risk-adjusted profitability, and MDD measures downside risk. Let $V_0$ and $V_T$ denote the initial and final portfolio values over a trading period of $T$ days. Let $C=252$ denote the number of trading days in one year, for cryptocurrency we use $C=365$. The daily return sequence is defined as
\begin{equation}
    \mathbf r =
    \left[
    \frac{V_1 - V_0}{V_0},
    \frac{V_2 - V_1}{V_1},
    \ldots,
    \frac{V_T - V_{T-1}}{V_{T-1}}
    \right]^\top .
\end{equation}

\begin{itemize}[leftmargin=*]

    \item \textbf{Annual Rate of Return (ARR).}
    ARR measures the annualized average return over the evaluation period:
    \begin{equation}
        \mathrm{ARR}
        =
        \frac{V_T - V_0}{V_0}
        \times
        \frac{C}{T}.
    \end{equation}

    \item \textbf{Cumulative Return (CR).}
    CR measures the total portfolio return over the full evaluation period:
    \begin{equation}
        \mathrm{CR}
        =
        \frac{V_T - V_0}{V_0}.
    \end{equation}
    Thus, ARR is the annualized form of CR: $\mathrm{ARR}
        =
        \mathrm{CR}
        \times
        \frac{C}{T}$.

    \item \textbf{Sharpe Ratio (SR).}
    SR measures risk-adjusted return using the standard deviation of returns:
    \begin{equation}
        \mathrm{SR}
        =
        \frac{\mathbb E[\mathbf r]}{\sigma[\mathbf r]},
    \end{equation}
    where $\mathbb E[\cdot]$ denotes expectation, $\sigma[\cdot]$ denotes standard deviation.

    \item \textbf{Maximum Drawdown (MDD).}
    MDD measures the largest peak-to-trough loss during the evaluation period:
    \begin{equation}
    \small
        \mathrm{MDD}
        =
        \max_{0 \le t \le T}
        \frac{P_t - V_t}{P_t},
        \quad
        P_t = \max_{0 \le s \le t} V_s .
    \end{equation}
    A lower MDD indicates smaller downside risk.

    \item \textbf{Calmar Ratio (CalR).}
    CalR compares annualized return with maximum drawdown:
    \begin{equation}
        \mathrm{CalR}
        =
        \frac{\mathrm{ARR}}{\mathrm{MDD}}.
    \end{equation}
    A higher CalR indicates stronger return relative to downside risk.

    \item \textbf{Sortino Ratio (SoR).}
    SoR measures risk-adjusted return while focusing only on downside volatility:
    \begin{equation}
        \mathrm{SoR}
        =
        \frac{\mathbb E[\mathbf r]}{\sigma[\mathbf r^{-}]},
    \end{equation}
    where $\mathbf r^{-}=\{r_t \in \mathbf r \mid r_t < 0\}$ denotes the negative-return subsequence. A higher SoR indicates better downside-risk-adjusted performance.

\end{itemize}

\subsection{Backtesting Protocol}
\label{app:backtest}

The \textit{Backtesting} module maps the binary directional
predictions produced by \method{} to target portfolio states and
evaluates the resulting trading strategy using the Qlib backtesting
engine. For each trading day, the model produces either a positive
or non-positive directional prediction, which corresponds to a long
or cash target state, respectively. Qlib then converts consecutive
target states into concrete portfolio actions, including entering a
long position, liquidating an existing position, maintaining the
current position, or remaining in cash. The resulting strategy is
evaluated using cumulative return, annualized return, Sharpe ratio,
maximum drawdown, and other standard financial metrics.

\noindent\highlight{Binary price-direction labels and HOLD actions.}
For an instance indexed by trading day $t$, the multimodal input is
constructed using observations available up to day $t-1$, and the
prediction target describes the subsequent close-to-close price
movement from day $t$ to day $t+1$. Specifically,
\begin{equation}
y_t=
\mathbb{I}\left[p_{t+1}^c\geq p_t^c\right],
\label{eq:yt}
\end{equation}
where $p_t^c$ denotes the adjusted closing price on trading day $t$.
Accordingly, $y_t=1$ represents a non-negative price movement from
$p_t^c$ to $p_{t+1}^c$, whereas $y_t=0$ represents a negative price
movement. An unchanged adjusted closing price is therefore included
in the positive class by definition.

The supervised learning task remains binary and does not introduce
a separate HOLD class. Instead, the binary prediction
$\widehat{y}_t$ specifies the target portfolio state:
$\widehat{y}_t=1$ corresponds to a long position, whereas
$\widehat{y}_t=0$ corresponds to a cash position. In this
long-or-flat setting, a non-positive prediction indicates a cash
target rather than a short position. The notions of BUY, SELL, and HOLD used in the backtesting protocol
refer to portfolio execution actions rather than prediction classes.
In particular, HOLD arises when the predicted target state remains
unchanged across two consecutive trading days. The complete state
transition logic used by Qlib is summarized in
Table~\ref{tab:trading_actions}.

\begin{table}[t]
\centering
\caption{Mapping from consecutive predicted target states to
portfolio execution actions in the Qlib backtesting engine.}
\label{tab:trading_actions}
\begin{tabular}{ccc}
\toprule
Day $t$
& Day $t+1$
& Execution action \\
\midrule
BUY  & BUY  & Hold the existing long position \\
BUY  & SELL & Liquidate the long position \\
SELL & BUY  & Enter a long position \\
SELL & SELL & Remain in cash \\
\bottomrule
\end{tabular}
\end{table}

Table~\ref{tab:trading_actions} describes only the transition from
consecutive predicted target states to portfolio actions. It neither
introduces a third prediction class nor changes the temporal
definition of the supervised label in Eq.~\ref{eq:yt}. For example,
two consecutive BUY states indicate that the existing long position
is maintained, whereas two consecutive SELL states indicate that
the portfolio remains in cash.

\input{tables/broader_asset_pool}
\input{tables/extended_test_horizons}

\noindent\highlight{Execution timing and close-to-close returns.}
For the sample indexed by day $t$, model inference is performed after the information cutoff of day $t-1$ and before the close of day $t$.
Because all observations assigned to day $t$ are excluded, the prediction $\hat{y}_t$ uses no information from day $t$ or later.
The resulting target portfolio state is submitted before the close of day $t$ and is assumed to be executed at the daily closing price $p_t^c$ under the Qlib daily-frequency backtesting convention.
The prediction $\hat{y}_t$ estimates the direction of the close-to-close price movement from $p_t^c$ to $p_{t+1}^c$. Consequently, when the model
changes its target state from cash to long on day $t$, the position
is entered at $p_t^c$, and the subsequent one-day holding return is
determined by $r_{t+1}
=
\frac{p_{t+1}^c-p_t^c}{p_t^c}$
before transaction costs. Thus, the close-to-close interval evaluated
for prediction and trading is $(t,t+1)$ rather than $(t-1,t)$. This daily-frequency setting follows Qlib's close-price execution convention and does not explicitly model intraday order latency or closing-auction slippage.

\noindent\highlight{Assumptions used in the backtesting setup.}
All methods are evaluated under the same daily, low-frequency,
long-or-flat protocol implemented using Qlib. Fractional shares are
allowed, so portfolio allocation is not affected by integer-share
rounding. We set \texttt{risk\_degree=0.99}, meaning that at most
$99\%$ of the current portfolio value is allocated to the long
position, while the remaining $1\%$ is retained as a cash buffer.
Accordingly, the residual cash position is induced by the prescribed
risk degree rather than by integer-share constraints.

When the predicted target state changes from positive to
non-positive, the existing long position is fully liquidated. No
short selling or borrowing is involved. We set the Qlib transaction
costs to \texttt{open\_cost=0.0005} and
\texttt{close\_cost=0.0005}, corresponding to $5$ basis points per
side and $10$ basis points for a complete entry--exit cycle. The
same cost configuration is applied to all evaluated methods. We do not impose an explicit turnover constraint and do not
separately model slippage or market impact, following the shared
experimental settings adopted in prior work
\cite{zhang2024multimodal,yu2024fincon}. These factors are not the
primary focus of this study, which examines predictive performance
under a common and consistent backtesting engine. Their potential
influence is partly mitigated by the daily trading frequency and the
use of highly liquid assets.

\subsection{Hyperparameter Settings}
\label{app:hyperparameter_settings}
We utilize AdamW~\cite{loshchilov2017decoupled} as the optimizer, keeping its default momentum coefficients $(0.9, 0.999)$. The weight decay is set to 0.01, the gradient norm is clipped to $1.0$, and the batch size is fixed to 64. For trainable methods, we search the learning rate over $\{2\times10^{-5}, 1\times10^{-4}, 1\times10^{-3}\}$ and the look-back window over $\{10, 14, 20\}$. For our method, we perform a grid search for the alignment weight $\beta_1$ over $\{0.1, 0.2, 0.5, 1.0\}$ and the robustness weight $\beta_3$ over $\{0.1,0.5, 1.0, 2.0\}$, and fix the instance-wise regularization weight $\beta_2$ to $0.1$. We also fix the temperature $\gamma_1$ to $0.05$ and search $\gamma_2$ over $\{0.05, 0.1, 0.2\}$. To ensure a fair comparison, all reinforcement-learning baselines are trained and evaluated in the same environment and under the same backtesting protocol.

\section{Additional Results}
\label{app:additional}
We provide additional experiments and analyses that further examine \method{}, covering a broader asset pool of 11 assets across different sectors, extended test horizons with the training and validation periods held fixed, a hyperparameter analysis on the cryptocurrency asset, and backtest equity curves on six assets. To keep the broader-asset and extended-horizon evaluations computationally tractable, we report representative and competitive baselines from each methodological family, selected based on their primary-experiment performance rather than the outcomes of these additional evaluations.

\subsection{Broader Asset Pool}
\label{app:broader}
To examine whether \method{} generalizes beyond the primary asset
set, we extend the evaluation to five additional U.S. equities
spanning different sectors, volatility profiles, and market-cap
ranges: the Coca-Cola Company (KO, Consumer Defensive, mega-cap),
Johnson \& Johnson (JNJ, Healthcare, mega-cap), United Parcel
Service (UPS, Industrials, large-cap), Upwork (UPWK, Communication
Services, small-cap), and Sprouts Farmers Market (SFM, Consumer
Defensive, mid-cap). Results are reported in
Table~\ref{tab:broader_asset_pool}.

As shown in Table~\ref{tab:broader_asset_pool}, \method{} achieves
the highest CR, ARR, and SR on all five additional assets. The
improvement is relatively modest on JNJ, where B\&H remains
competitive in CR and ARR, but is more pronounced on UPS and UPWK.
On SFM, \method{} is the only method that produces positive
cumulative and annualized returns, whereas all evaluated baselines
record negative returns. These results suggest that the performance
advantage is not limited to the technology-oriented assets in the
primary evaluation, although its magnitude varies across assets. The risk-adjusted results show a similar pattern. \method{} ranks
first in SoR and CalR on all five assets and achieves the lowest MDD
on UPS, UPWK, and SFM. On KO, VTA obtains a lower MDD than
\method{} (6.42\% versus 7.52\%), while on JNJ, TradingAgents achieves the lowest MDD, followed jointly by PPO and CoLAS, both with an MDD of 6.32\%. Therefore, the broader-asset evaluation indicates consistent improvements in
returns and risk-adjusted performance, but \method{} does not
uniformly achieve the lowest drawdown on every asset.

\subsection{Extended Test Horizon}
\label{app:extended_test_horizon}
We further evaluate whether the advantage of \method{} persists under a longer and more recent test horizon. Specifically, we extend the test period to 2025-04-01 through 2026-03-20, keeping the original training and validation sets unchanged, and re-evaluate on AAPL, GOOG, TSLA, and BTCUSD. This setting tests whether the model maintains its performance as market conditions drift away from the original evaluation window. Results are reported in Table~\ref{tab:extended_test_horizon}.

As shown in Table~\ref{tab:extended_test_horizon}, \method{}
achieves the highest CR, ARR, and SR on all four assets over the
extended horizon. The improvement is relatively moderate on GOOG,
where FinAgent remains competitive in return and Sharpe ratio, but
is more evident on AAPL and BTCUSD. In particular, on BTCUSD,
\method{} obtains a CR of 46.71\% and an SR of 1.62, while the
buy-and-hold strategy records a negative return and TradingAgents,
the second-best method, achieves a CR of 30.90\% and an SR of 1.00.
On TSLA, \method{} also improves over the strongest baseline across
all three return-oriented metrics. The risk-oriented results are broadly consistent with the return
metrics. \method{} achieves the highest SoR and CalR on all four
assets and the lowest MDD on AAPL, TSLA, and BTCUSD. On GOOG, PPO
produces the lowest MDD of 7.12\%, while \method{} ranks second with
8.08\%; nevertheless, \method{} retains the highest SoR and CalR on
this asset. Overall, \method{} ranks first on 23 of the 24 reported
asset--metric combinations and second on the remaining one. These
results provide evidence that its advantage persists over the
examined extended horizon, although the magnitude of improvement
varies across assets and the evaluation remains limited to the four
selected markets.

\input{tables/app_ablation_study}

\subsection{Comparison of Corroborated Signal Constructions}
\label{app:ablation}
Table~\ref{tab:ablation_study_2} evaluates whether the gains of \method{} depend on the proposed corroborated-signal construction rather than generic multimodal aggregation. We compare \method{} with mean pooling, concatenation, and two construction variants, \method{}$_{FSC}$ and \method{}$_{unsigned}$. The full model consistently achieves the best ARR and SR on both assets, supporting the importance of combining a dominant shared component with signed modality support.

To isolate the effect of signal construction, all variants use the
same modality representations and downstream prediction setting,
while replacing only the representation passed to the prediction
layer. Let
$\mathbf{V}_i=[\bar{\mathbf{v}}_i^1,\ldots,
\bar{\mathbf{v}}_i^k]$ denote the matrix of normalized modality
representations, with
$\mathbf{V}_i=\mathbf{P}_i\mathbf{\Lambda}_i\mathbf{R}_i^\top$.
Mean Pooling directly averages the modality representations,
\[
\mathbf{z}_i^{\mathrm{mean}}
=
\frac{1}{k}\sum_{m=1}^{k}\bar{\mathbf{v}}_i^m
\]
whereas Concatenation stacks them and applies a learnable projection,
\[
\mathbf{z}_i^{\mathrm{cat}}
=
\phi_{\mathrm{cat}}
\left(
[\bar{\mathbf{v}}_i^1;\ldots;\bar{\mathbf{v}}_i^k]
\right)
\]
These two constructions retain the available modality information
but do not explicitly identify a shared spectral direction.

The first-singular-component variant,
$\method{}_{\mathrm{FSC}}$, uses only the leading component,
\[
\mathbf{z}_i^{\mathrm{FSC}}
=
\lambda_{i,1}\mathbf{p}_{i,1}
\]
and therefore tests whether the dominant spectral component alone
is sufficient. The unsigned variant retains the leading direction
but replaces the signed modality contributions
$a_{i,m}=\mathbf{p}_{i,1}^{\top}\bar{\mathbf{v}}_i^m$
with their magnitudes:
\[
\mathbf{z}_i^{\mathrm{unsigned}}
=
\frac{1}{k}
\left(
\sum_{m=1}^{k}|a_{i,m}|
\right)
\mathbf{p}_{i,1}
\]
Under this construction, modalities pointing in opposite directions
are treated as equally supportive and no longer cancel. In contrast,
the full model constructs
\[
\mathbf{z}_i
=
\frac{1}{k}
\left(
\sum_{m=1}^{k}a_{i,m}
\right)
\mathbf{p}_{i,1}
\]
so its magnitude depends on both the dominant shared direction and
the net signed support along that direction.

The results distinguish the contribution of these two factors.
Mean Pooling and Concatenation produce substantially lower ARR than
all three spectral constructions on both assets, indicating that
generic aggregation does not reproduce the performance of the
corroborated-signal formulation. Retaining only the first singular
component improves ARR to 37.46\% on AAPL and 60.24\% on BTCUSD,
but its AAPL SR remains 0.77, below the 0.90 obtained by
Concatenation. Thus, extracting a dominant component alone does not
uniformly improve risk-adjusted performance. Ignoring contribution
signs also reduces performance: $\method{}_{\mathrm{unsigned}}$
reaches an ARR of 41.23\% and 58.42\% on AAPL and BTCUSD,
respectively, compared with 67.79\% and 84.64\% for the full model.
Relative to the strongest alternative construction, the full model
improves ARR by 26.56 percentage points and SR by 0.27 on AAPL,
and improves ARR by 24.40 percentage points and SR by 1.02 on
BTCUSD. These comparisons suggest that the dominant spectral
component and cancellation-aware signed aggregation make
complementary contributions under the two evaluated settings.

\begin{figure}[t]
    \centering
    \includegraphics[width=0.48\textwidth]{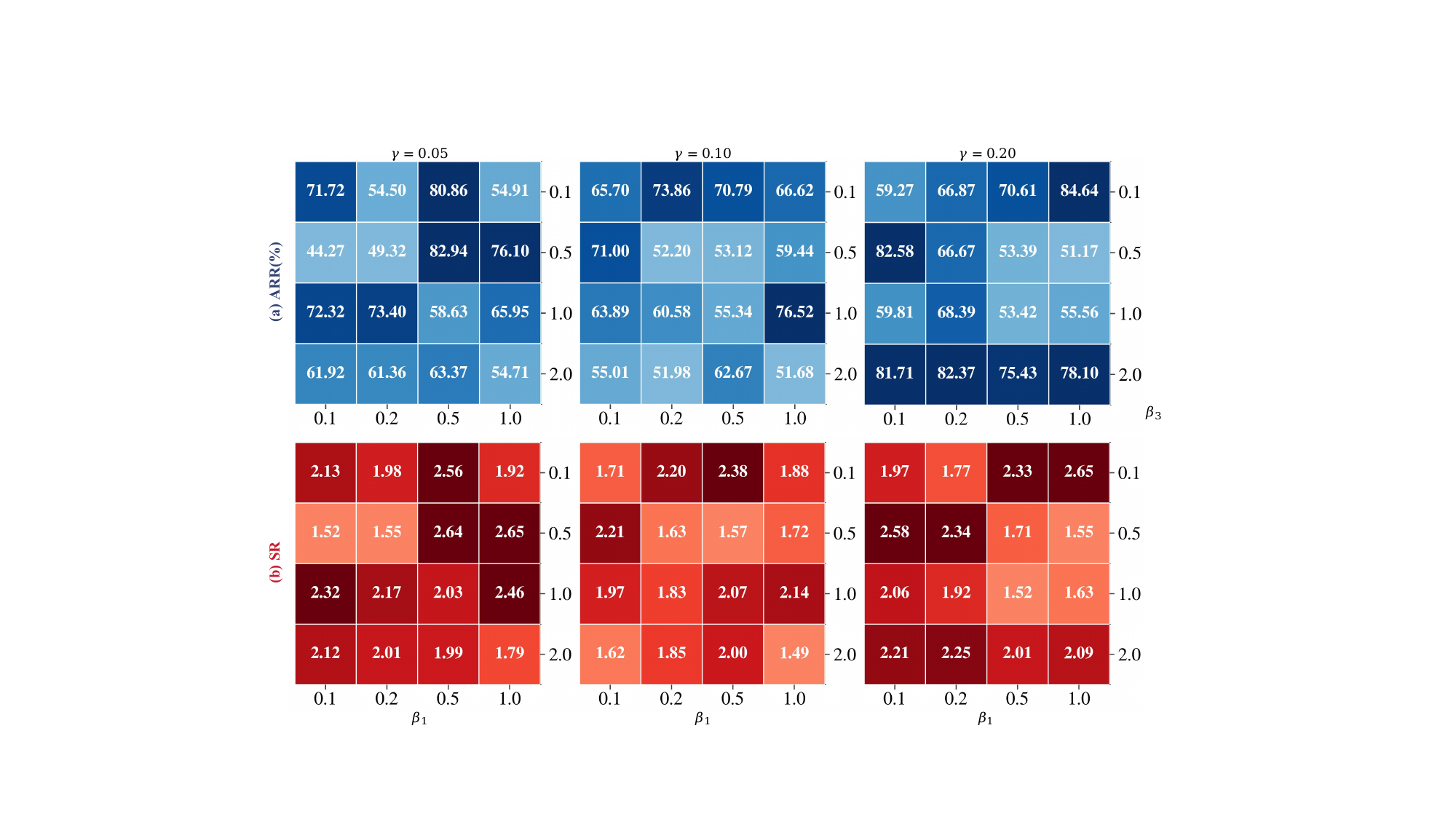}
    \caption{Hyperparameter sensitivity analysis of \method{} on BTCUSD with ARR(\%) and SR. We vary the temperature $\gamma$, the alignment weight \(\beta_{1}\), and the robustness weight \(\beta_{3}\).}
    \Description{Hyperparameter analysis}
    \label{fig:exp_RQ3_btcusd}
    \Description{Hyperparameter analysis}
\end{figure}

\subsection{Hyperparameter Analyses}
\label{app:hyperparameter_analyses}
We report several findings from Figure~\ref{fig:exp_RQ3_btcusd}.
\textbf{First, \method{} remains robust on BTCUSD across all tested configurations.}
Every setting achieves an ARR above 44\% and an SR above 1.49, indicating that its strong return–risk profile does not depend on careful hyperparameter tuning.
\textbf{Second, the consensus weight $\beta_1$ has little effect on aggregate performance.}
The mean ARR stays within a narrow 63.5\%--65.8\% band across all four values ($65.8$, $63.5$, $65.0$, and $64.6$ for $\beta_1=0.1,0.2,0.5,1.0$), and the mean SR likewise varies only between 1.96 and 2.07.
Unlike on AAPL, where an intermediate $\beta_1$ was clearly preferable, on BTCUSD the performance is relatively insensitive to $\beta_1$.
\textbf{Third, the robustness weight $\beta_3$ favors smaller values.}
The lowest setting $\beta_3=0.1$ attains the best mean ARR (68.4\%) and SR (2.12), and both metrics decline mildly as $\beta_3$ grows, dropping to 61.9\% ARR at $\beta_3=0.5$ and to 1.95 SR at $\beta_3=2.0$.
Because cryptocurrency modalities are already highly volatile, only a light perturbation-invariance term is needed, and a heavier one excessively smooths informative modality-specific variation.
\textbf{Finally, the separability temperature $\gamma$ shows no monotone trend, and moderate sharpness is not required.}
The middle value $\gamma=0.10$ is in fact a mild trough (61.9\% ARR, 1.89 SR), whereas the sharper $\gamma=0.05$ (64.1\%, 2.12) and the softer $\gamma=0.20$ (68.1\%, 2.04) are slightly stronger.
Overall, across the entire grid \method{} sustains an ARR above 44\% and an SR above 1.49 on BTCUSD, and its aggregate performance shifts by only a few points as the hyperparameters vary, underscoring that its strong return--risk trade-off is not the product of delicate tuning but a stable property of the method.

\subsection{Robustness Study}
\label{app:robustness}
Figures~\ref{robustness_radom_drop}
evaluate \method{} and its variant
without RPL under deterministic single-modality removal and
stochastic modality dropout. This experiment complements the
clean-setting ablation in Table~\ref{tab:ablation_modules} by
directly examining whether RPL reduces performance degradation when
input modalities become unavailable.

In the deterministic setting, one selected modality is zeroed out
for every test instance, while the other three modalities remain
unchanged. We separately remove the market, technical, news, and
sentiment modalities, yielding four fixed missing-modality
conditions. In the stochastic setting, each test instance is assigned
one modality before evaluation, and the assigned modality is zeroed
out independently with probability
$p\in\{0.25,0.50,0.75,1.00\}$. Thus, $p=1.00$ means that every
instance loses one modality, although the missing modality may differ
across instances. The dotted and dashed horizontal lines denote the
corresponding clean performance without and with RPL, respectively.

Under deterministic removal, both variants perform below their clean
counterparts, showing that none of the four modalities is fully
redundant. Among the tested removals, excluding the market modality
causes the largest reduction on both assets. For \method{}, AAPL ARR
decreases from 67.79\% to 58.42\%, while BTCUSD ARR decreases from
84.64\% to 72.18\%. Removing news or sentiment produces smaller
reductions in these two datasets, although performance still remains
below the clean setting. More importantly, the model with RPL
outperforms the variant without RPL in both ARR and SR under every
single-modality removal condition. For example, after removing the
market modality, RPL improves ARR from 51.10\% to 58.42\% on AAPL
and from 63.45\% to 72.18\% on BTCUSD. These comparisons indicate
that RPL reduces, but does not eliminate, the loss caused by a
completely unavailable modality.

The stochastic results show a similar pattern. ARR and SR decline
progressively as the dropout probability increases for both model
variants, reflecting the increasing frequency of missing-modality
instances. However, the performance gap between \method{} and the
variant without RPL generally becomes larger at higher dropout
rates. At $p=1.00$, \method{} achieves an ARR of 54.12\% and an SR
of 1.08 on AAPL, compared with 42.35\% and 0.76 without RPL. On
BTCUSD, the corresponding results are 68.10\% and 1.98 with RPL,
versus 51.40\% and 1.21 without RPL. Therefore, the results support
the narrower conclusion that RPL improves tolerance to the examined
single-missing-modality conditions, particularly when modality
unavailability becomes frequent. They do not establish robustness
to multiple simultaneous missing modalities or arbitrary forms of
input corruption.

\begin{figure*}[t]
    \centering
    \includegraphics[width=0.98\textwidth]{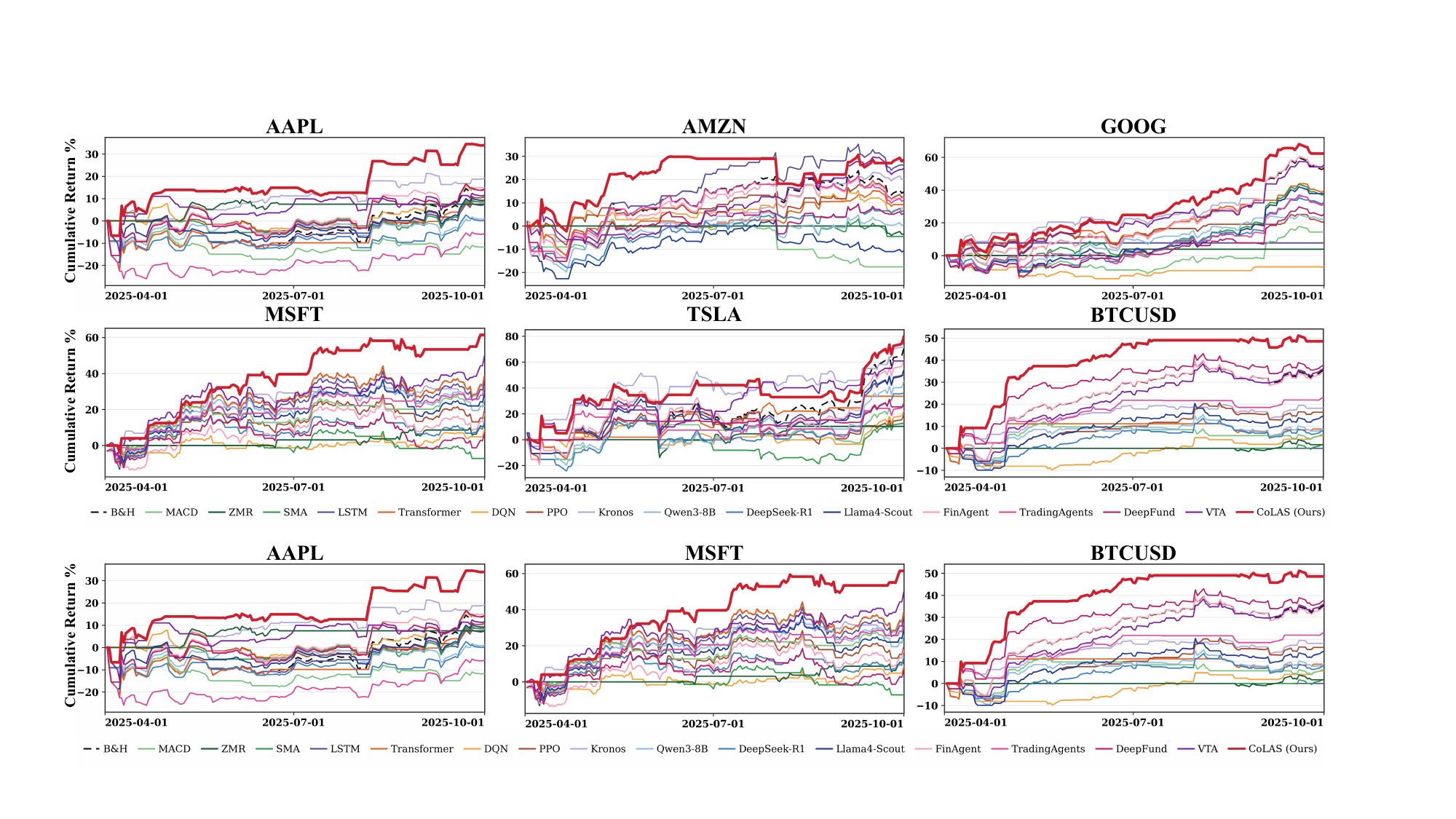}
    \caption{Performance comparison on cumulative return (CR\%$\uparrow$) between \method{} and other baselines over all assets.}
    \Description{Main experiment}
    \label{fig:cr_curve_all}
\end{figure*}

\begin{figure}[t]
    \centering
    \includegraphics[width=0.48\textwidth]{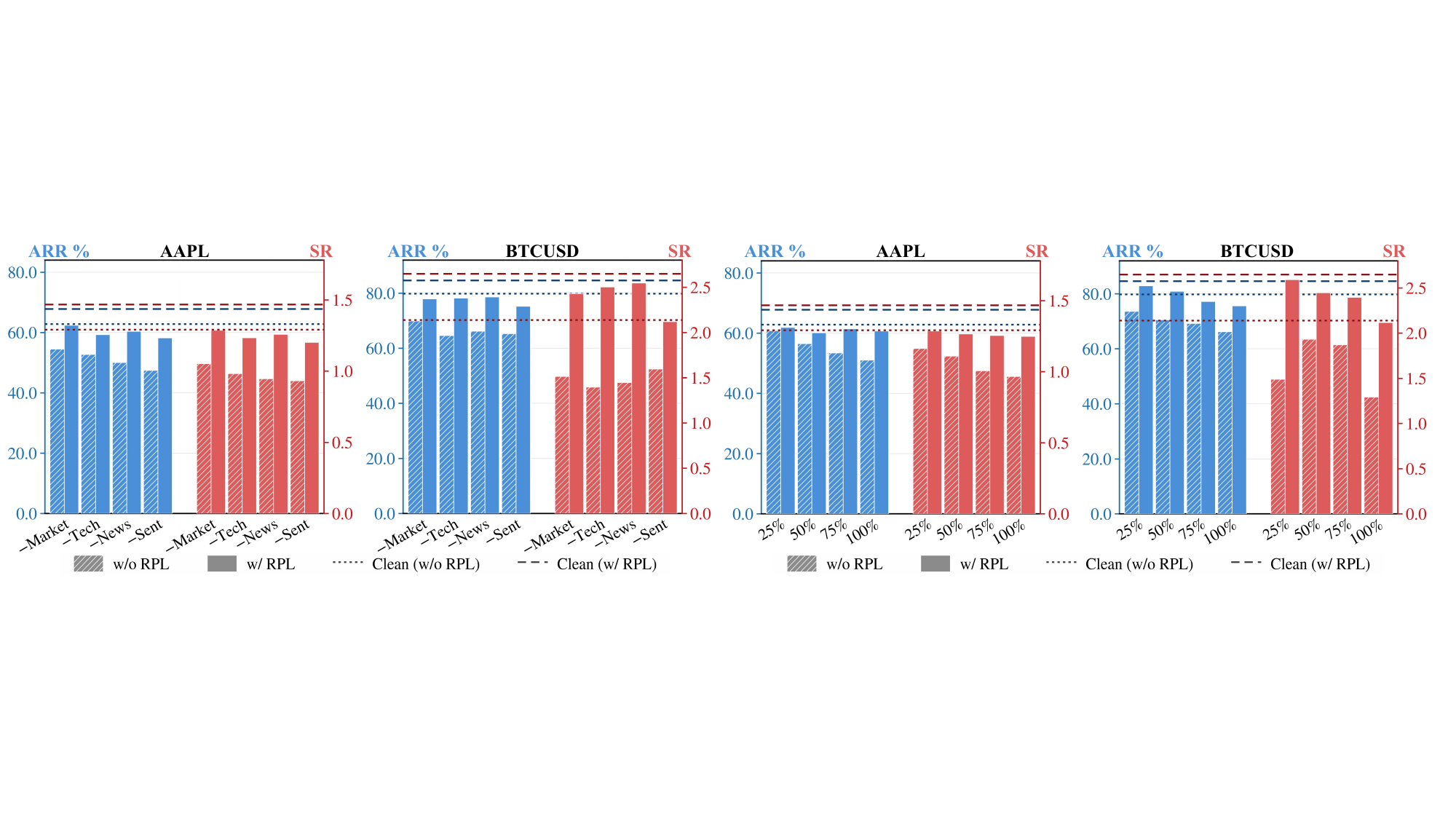}

\includegraphics[width=0.48\textwidth]{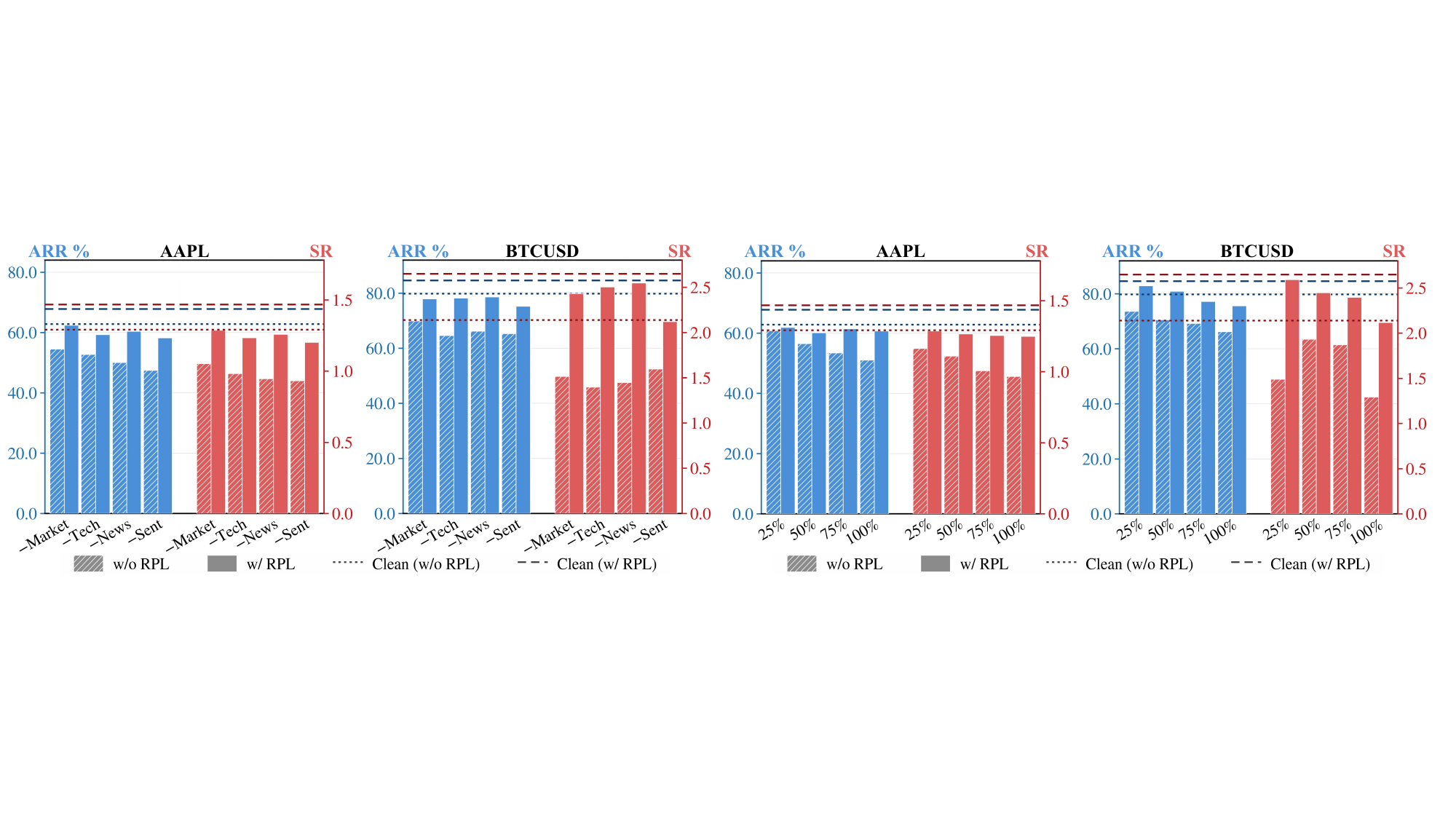}
    \caption{Stochastic modality and single-modality removal dropout on AAPL and BTCUSD.}
    \label{robustness_radom_drop}
    \Description{robust2}
\end{figure}

\subsection{Cumulative Return Analyses}
\label{app:cr_curve}
Figure~\ref{fig:cr_curve_all} compares the cumulative-return
trajectories of \method{} and the baselines across the six primary
assets. \method{} achieves the highest terminal cumulative return on
all six assets, although it does not remain above every baseline at
all points during the evaluation period.

The return trajectories exhibit different patterns across assets.
On AAPL and BTCUSD, a substantial portion of the final return is
accumulated through several pronounced upward movements, followed by
relatively stable intervals. On GOOG and MSFT, the gains are
accumulated through multiple stages, with temporary pullbacks before
the end of the evaluation period. AMZN and TSLA exhibit more
fluctuating trajectories, indicating that the advantage of
\method{} is not associated with a uniformly smooth return path.
Nevertheless, \method{} finishes the evaluation period ahead of the
baselines on both assets.

The step-like segments observed on some assets are consistent with
returns being concentrated in a limited number of periods. However,
cumulative-return curves alone do not identify trading frequency,
turnover, or position exposure. Therefore, these curves should not
be interpreted as direct evidence that \method{} trades only when
the modalities corroborate. Such a mechanism would require a
separate analysis of trading actions or portfolio positions.

%% file: tables/notation.tex

\begin{table}[t]
\centering
\scriptsize
\caption{Summary of the principal notation used in CoLAS.}
\label{tab:notation}
\renewcommand{\arraystretch}{1.04}
\setlength{\tabcolsep}{3pt}
\begin{tabularx}{\columnwidth}{
  @{}
  >{\centering\arraybackslash}p{0.25\columnwidth}
  X
  @{}
}
\toprule
\textbf{Symbol} & \textbf{Definition} \\
\midrule
\multicolumn{2}{@{}l}{\textit{Data and indices}} \\
$\mathcal{X}$ &
$\{(\{\mathbf{X}_i^m\}_{m\in\mathcal{M}},y_i)\}_{i=1}^{N}$,
the training dataset \\
$\mathcal{M}$; $k$ &
Modality set; $k=|\mathcal{M}|$ \\
$N$; $B$ &
Numbers of instances and mini-batch instances \\
$i,n$ &
Instance indices; $n$ is the second batch index \\
$m$; $m^\star$ &
Modality index; randomly perturbed modality \\
$j$ &
Singular-component index, $j\in\{1,\ldots,k\}$ \\
$T_{\mathrm{win}}$; $\tau$ &
Look-back-window length; within-window time index \\
$\mathbf{X}_i^m$; $y_i$ &
Input sequence
$\{\mathbf{x}_{i,\tau}^m\}_{\tau=1}^{T_{\mathrm{win}}}$;
binary direction label \\
\midrule
\multicolumn{2}{@{}l}{\textit{Multimodal embedding}} \\
$f_m(\cdot;\boldsymbol{\theta}_m)$ &
Encoder for modality $m$ with parameters
$\boldsymbol{\theta}_m$ \\
$\mathbf{h}_i^m$ &
Modality-specific embedding
$f_m(\mathbf{X}_i^m;\boldsymbol{\theta}_m)$ \\
$\psi_m$ &
Modality-specific alignment map \\
$\mathbf{v}_i^m$;
$\bar{\mathbf{v}}_i^m$ &
Unnormalized and $\ell_2$-normalized aligned
representations in $\mathbb{R}^{d_c}$ \\
$d_c$ &
Dimension of the common alignment space \\
$\mathbf{V}_i$ &
$[\bar{\mathbf{v}}_i^{m_1},\ldots,
\bar{\mathbf{v}}_i^{m_k}]
\in\mathbb{R}^{d_c\times k}$ \\
$\mathbf{G}_i$ &
Gram matrix
$\mathbf{V}_i^\top\mathbf{V}_i
\in\mathbb{R}^{k\times k}$ \\
\midrule
\multicolumn{2}{@{}l}{\textit{Corroboration mining}} \\
$\mathbf{P}_i\boldsymbol{\Lambda}_i
\mathbf{R}_i^\top$ &
Compact SVD of $\mathbf{V}_i$, with
$\boldsymbol{\Lambda}_i=
\operatorname{diag}(\lambda_{i,1},\ldots,\lambda_{i,k})$ \\
$\lambda_{i,j}$ &
$j$-th singular value; 
$\lambda_{i,1}\ge\cdots\ge\lambda_{i,k}\ge0$ \\
$\mathbf{p}_{i,j}$;
$\mathbf{r}_{i,j}$ &
$j$-th left and right singular vectors \\
$\mathbf{V}_i^{(1)}$ &
Leading rank-1 component
$\lambda_{i,1}\mathbf{p}_{i,1}\mathbf{r}_{i,1}^{\top}$ \\
$a_{i,m}$ &
Signed modality contribution
$\mathbf{p}_{i,1}^{\top}\bar{\mathbf{v}}_i^m$ \\
$s_i$ &
Net signed support
$\sum_{m=1}^{k}a_{i,m}$ \\
$\mathbf{1}_k$ &
All-ones vector in $\mathbb{R}^{k}$ \\
$\mathbf{z}_i$ &
Corroborated signal
$k^{-1}\mathbf{V}_i^{(1)}\mathbf{1}_k
=k^{-1}s_i\mathbf{p}_{i,1}$ \\
$\rho_i$; $\delta_i$ &
Spectral concentration $\lambda_{i,1}^2/k$;
net signed support factor
$(\mathbf{r}_{i,1}^{\top}\mathbf{1}_k)^2/k$ \\
\midrule
\multicolumn{2}{@{}l}{\textit{Robust prediction and optimization}} \\
$\Pi$;
$\widetilde{\mathbf{X}}_i^{m^\star}$ &
Perturbation operator; perturbed modality input
$\Pi(\mathbf{X}_i^{m^\star})$ \\
$\boldsymbol{\epsilon}$;
$\sigma_\epsilon$ &
Gaussian noise and its standard deviation,
$\boldsymbol{\epsilon}\sim
\mathcal{N}(\mathbf{0},\sigma_\epsilon^2\mathbf{I})$ \\
$\widetilde{\mathbf{z}}_i$ &
Corroborated signal from the perturbed input \\
$g$;
$\ell_i,\widetilde{\ell}_i$ &
Prediction head; clean and perturbed logits \\
$\mathcal{L}_{\mathrm{SVM}}$;
$\mathcal{L}_{\mathrm{IR}}$ &
Singular-value-maximization and
instance-wise regularization losses \\
$\mathcal{L}_{\mathrm{rob}}$;
$\mathcal{L}_{\mathrm{cls}}$ &
Robust-consistency and binary-classification losses \\
$\mathcal{L}_{\mathrm{CoLAS}}$ &
Overall training objective \\
$\gamma_1,\gamma_2$;
$\beta_1,\beta_2,\beta_3$ &
Loss temperatures; loss weights \\
$\boldsymbol{\Theta}$; $\eta$ &
All trainable parameters; learning rate \\
\bottomrule
\end{tabularx}
\end{table}

%% file: tables/news_coverage.tex
\begin{table}[t]
\centering
\scriptsize
\caption{Dataset statistics detailing the chronological period and
the amount of each data source for the six primary assets.}
\label{tab:dataset_statistics}
\renewcommand{\arraystretch}{1.08}
\setlength{\tabcolsep}{3.5pt}

\resizebox{\columnwidth}{!}{
\begin{tabular}{lrrrrrr}
\toprule
\textbf{Asset}
& \textbf{AAPL}
& \textbf{AMZN}
& \textbf{GOOG}
& \textbf{MSFT}
& \textbf{TSLA}
& \textbf{BTCUSD} \\
\midrule

Data period
& \multicolumn{6}{c}{From 2023-10-01 to 2026-03-20} \\

\midrule

Market data
& \multicolumn{6}{c}{
Daily $\times$ (open, high, low, adjusted close, volume)
} \\

Technical data
& \multicolumn{6}{c}{
Daily $\times$ $d_{\mathrm{tech}}$ technical indicators
} \\

\midrule

News records
& 8,866
& 7,547
& 6,201
& 7,924
& 11,676
& 11,986 \\

Sentiment records
& 11,020
& 6,480
& 12,410
& 9,150
& 8,350
& 7,120 \\

\bottomrule
\end{tabular}
}
\end{table}

%% file: tables/news_prompt.tex
\begin{table*}[t]
\centering
\scriptsize
\caption{Prompt template used to extract the news representation.}
\label{tab:news_agent_prompt_example}
\renewcommand{\arraystretch}{1.08}
\setlength{\tabcolsep}{5pt}

\begin{tabularx}{\textwidth}{
    @{}
    >{\raggedright\arraybackslash\bfseries}p{0.14\textwidth}
    >{\raggedright\arraybackslash}X
    @{}}
\toprule
\textbf{Component} & \textbf{Content} \\
\midrule

\rowcolor{gray!12}
\multicolumn{2}{@{}l}{\textbf{Prompt configuration}} \\

Task
& Predict the next-day stock trading signal from recent news. \\

Input
& News summaries from the preceding $T$ trading days. \\

Output
& A concise analysis followed by a binary signal:
\texttt{Bullish} or \texttt{Bearish}. \\

\addlinespace[2pt]
\rowcolor{gray!12}
\multicolumn{2}{@{}l}{\textbf{Unified prompt with a filled example}} \\

Prompt
&
{\ttfamily
You are a financial news analyst. Given recent news about a stock,
assess its implications and predict the stock's next-day direction.
Base your analysis only on the supplied news. Provide a concise
rationale and return either Bullish or Bearish.

\medskip
Asset: AAPL

Reference date: 2025-11-03

News summaries from the preceding \(T\) trading days:

\medskip
Day 1: Apple trades steadily as BofA raises its iPhone 17 unit
forecast and price target, projecting stronger long-term revenue
and margins. Additional reports preview Apple's upcoming earnings.

\medskip
Day 2: Apple is scheduled to report fourth-quarter results after
Thursday's market close. Analysts expect earnings of \$1.77 per
share and quarterly revenue of \$102.17 billion. Major U.S.
technology companies report generally upbeat results.

\medskip
Day 3: Apple shares reach new all-time highs following strong
fourth-quarter results. The company guides for a stronger,
potentially record-breaking first quarter. Wall Street analysts
respond positively, citing stronger iPhone demand and a solid
holiday outlook.

\medskip
\ldots

\medskip
Output format:

Analysis: <concise news-based analysis>

Signal: Bullish/Bearish
}
\\

\addlinespace[2pt]
\rowcolor{gray!12}
\multicolumn{2}{@{}l}{\textbf{Response}} \\

Output
&
{\ttfamily
Analysis: The recent news flow is predominantly positive for
Apple. Strong fourth-quarter results, optimistic forward guidance,
and bullish analyst reactions indicate favorable near-term
momentum. Although broader market uncertainties remain, the
company-specific news supports a positive next-day outlook.

Signal: Bullish
}
\\

\bottomrule
\end{tabularx}
\end{table*}

%% file: tables/sentiment_prompt.tex
\begin{table*}[t]
\centering
\scriptsize
\caption{Prompt template used to extract the sentiment representation.}
\label{tab:sentiment_agent_prompt_example}
\renewcommand{\arraystretch}{1.08}
\setlength{\tabcolsep}{5pt}

\begin{tabularx}{\textwidth}{
    @{}
    >{\raggedright\arraybackslash\bfseries}p{0.14\textwidth}
    >{\raggedright\arraybackslash}X
    @{}}
\toprule
\textbf{Component} & \textbf{Content} \\
\midrule

\rowcolor{gray!12}
\multicolumn{2}{@{}l}{\textbf{Prompt configuration}} \\

Task
& Classify the overall sentiment expressed toward the target asset
in recent sentiment-related records. \\

Input
& Sentiment-related records from the preceding $T$ trading days. \\

Output
& A concise analysis followed by a binary sentiment label:
\texttt{Positive} or \texttt{Negative}. \\

\addlinespace[2pt]
\rowcolor{gray!12}
\multicolumn{2}{@{}l}{\textbf{Unified prompt with a filled example}} \\

Prompt
&
{\ttfamily
You are a financial sentiment analyst. Given recent sentiment-related
information about the target asset, assess the overall sentiment expressed
toward the target asset. Base your analysis only on the supplied
records. Provide a concise rationale and return either Positive or
Negative.

\medskip
Asset: AAPL

Reference date: 2025-11-03

Sentiment-related information for the preceding $T$ trading days:

\medskip
Day 1: Apple is scheduled to report its fourth-quarter results on
Thursday, with investor attention focused on iPhone 17 demand.
Analysts expect steady performance and maintain optimistic price
targets.

\medskip
Day 2: Apple trades steadily as BofA raises its iPhone 17 unit
forecast and price target, projecting stronger long-term revenue
and margins. Additional reports preview Apple's upcoming earnings.

\medskip
Day 3: Apple is scheduled to report fourth-quarter results after
Thursday's market close. Analysts expect earnings of \$1.77 per
share and quarterly revenue of \$102.17 billion, while U.S. stock
futures remain mixed.

\medskip
\ldots

\medskip
Output format:

Analysis: <concise sentiment analysis based only on the supplied records>

Sentiment: Positive/Negative
}
\\

\addlinespace[2pt]
\rowcolor{gray!12}
\multicolumn{2}{@{}l}{\textbf{Response}} \\

Output
&
{\ttfamily
Analysis: The supplied records are predominantly positive for Apple.
Analysts anticipate solid quarterly performance and sustained
iPhone demand, while BofA has raised its unit forecast and price
target. Although broader market conditions are mixed, no material
negative company-specific development is reported.

Sentiment: Positive
}
\\

\bottomrule
\end{tabularx}
\end{table*}

%% file: tables/broader_asset_pool.tex
\begin{table*}[t]
\centering
\caption{Extended performance on the five newly added assets
(KO, JNJ, UPS, UPWK, and SFM), evaluated over 2025-04-01 to
2026-03-20 with the original training and validation sets held fixed.
The first panel reports return-oriented metrics, where higher is
better for CR, ARR, and SR. The second panel reports risk-oriented
metrics, where higher is better for SoR and CalR, and lower is better
for MDD. Best results are \textbf{bolded}, and second-best results
are \underline{underlined}.}
\label{tab:broader_asset_pool}

\resizebox{\textwidth}{!}{
\begin{tabular}{l *{15}{c}}
\toprule
\multirow{2}{*}{\textbf{Method}}
& \multicolumn{3}{c}{\textbf{KO}}
& \multicolumn{3}{c}{\textbf{JNJ}}
& \multicolumn{3}{c}{\textbf{UPS}}
& \multicolumn{3}{c}{\textbf{UPWK}}
& \multicolumn{3}{c}{\textbf{SFM}} \\
\cmidrule(lr){2-4}
\cmidrule(lr){5-7}
\cmidrule(lr){8-10}
\cmidrule(lr){11-13}
\cmidrule(lr){14-16}
& \textbf{CR\%$\uparrow$}
& \textbf{ARR\%$\uparrow$}
& \textbf{SR$\uparrow$}
& \textbf{CR\%$\uparrow$}
& \textbf{ARR\%$\uparrow$}
& \textbf{SR$\uparrow$}
& \textbf{CR\%$\uparrow$}
& \textbf{ARR\%$\uparrow$}
& \textbf{SR$\uparrow$}
& \textbf{CR\%$\uparrow$}
& \textbf{ARR\%$\uparrow$}
& \textbf{SR$\uparrow$}
& \textbf{CR\%$\uparrow$}
& \textbf{ARR\%$\uparrow$}
& \textbf{SR$\uparrow$} \\
\midrule

\multicolumn{16}{c}{\textit{Rule-based Strategies}} \\
\midrule
B\&H
& 6.94 & 7.23 & 0.49
& \second{57.13} & \second{59.24} & 2.73
& -6.22 & -6.41 & -0.07
& -13.62 & -14.12 & 0.04
& -46.93 & -48.72 & -1.26 \\

SMA
& 1.72 & 1.83 & 0.21
& 33.64 & 35.02 & 2.23
& 5.24 & 5.43 & 0.35
& 28.63 & 29.84 & 0.89
& -24.93 & -25.92 & -1.46 \\

\midrule
\multicolumn{16}{c}{\textit{Single-modal Models}} \\
\midrule
Transformer
& 14.03 & 14.52 & 1.01
& 20.54 & 21.32 & 1.32
& 5.32 & 5.51 & 0.45
& 15.72 & 16.34 & 0.54
& -38.62 & -40.04 & -0.95 \\

PPO
& -15.03 & -15.62 & -1.19
& 10.02 & 10.43 & 0.92
& -6.13 & -6.32 & -0.29
& -15.34 & -15.93 & -0.17
& -15.63 & -16.12 & -0.69 \\

Kronos
& 14.82 & 15.43 & 1.21
& 25.13 & 26.02 & 1.69
& 5.34 & 5.52 & 0.35
& 42.94 & 44.52 & \second{1.13}
& -25.42 & -26.33 & -0.56 \\

\midrule
\multicolumn{16}{c}{\textit{Multimodal General LLMs}} \\
\midrule
Llama4-Scout-17B
& -6.42 & -6.63 & -0.49
& 43.72 & 45.43 & \second{2.77}
& -18.63 & -19.32 & -0.70
& -22.23 & -23.02 & -0.53
& -39.42 & -40.93 & -1.10 \\

\midrule
\multicolumn{16}{c}{\textit{Multimodal Financial LLMs}} \\
\midrule
FinAgent
& 5.12 & 5.33 & 0.40
& 52.93 & 54.82 & 2.59
& \second{12.14} & \second{12.63} & \second{0.64}
& 3.42 & 3.51 & 0.34
& -28.12 & -29.13 & -0.86 \\

TradingAgents
& -3.62 & -3.81 & -0.52
& 22.73 & 23.62 & 2.76
& -17.52 & -18.13 & -0.81
& -16.53 & -17.12 & -0.24
& -54.92 & -56.93 & -2.10 \\

DeepFund
& -10.02 & -10.42 & -0.76
& 17.03 & 17.62 & 1.19
& -5.12 & -5.31 & -0.09
& -17.82 & -18.53 & -0.14
& \second{-5.53} & \second{-5.72} & \second{-0.07} \\

VTA
& \second{22.74} & \second{23.52} & \second{1.54}
& 35.02 & 36.23 & 2.40
& 8.03 & 8.32 & 0.47
& \second{43.02} & \second{44.63} & 1.05
& -38.32 & -39.83 & -0.98 \\

\midrule
\rowcolor{oursblue}
\method{}~(Ours)
& \best{34.63} & \best{35.82} & \best{2.43}
& \best{60.34} & \best{62.53} & \best{2.87}
& \best{25.32} & \best{26.24} & \best{1.71}
& \best{54.83} & \best{56.82} & \best{1.46}
& \best{15.62} & \best{16.13} & \best{0.81} \\

\bottomrule
\end{tabular}
}

\vspace{0.1em}

\resizebox{\textwidth}{!}{
\begin{tabular}{l *{15}{c}}
\toprule
\multirow{2}{*}{\textbf{Method}}
& \multicolumn{3}{c}{\textbf{KO}}
& \multicolumn{3}{c}{\textbf{JNJ}}
& \multicolumn{3}{c}{\textbf{UPS}}
& \multicolumn{3}{c}{\textbf{UPWK}}
& \multicolumn{3}{c}{\textbf{SFM}} \\
\cmidrule(lr){2-4}
\cmidrule(lr){5-7}
\cmidrule(lr){8-10}
\cmidrule(lr){11-13}
\cmidrule(lr){14-16}
& \textbf{SoR$\uparrow$}
& \textbf{CalR$\uparrow$}
& \textbf{MDD\%$\downarrow$}
& \textbf{SoR$\uparrow$}
& \textbf{CalR$\uparrow$}
& \textbf{MDD\%$\downarrow$}
& \textbf{SoR$\uparrow$}
& \textbf{CalR$\uparrow$}
& \textbf{MDD\%$\downarrow$}
& \textbf{SoR$\uparrow$}
& \textbf{CalR$\uparrow$}
& \textbf{MDD\%$\downarrow$}
& \textbf{SoR$\uparrow$}
& \textbf{CalR$\uparrow$}
& \textbf{MDD\%$\downarrow$} \\
\midrule

\multicolumn{16}{c}{\textit{Rule-based Strategies}} \\
\midrule
B\&H
& 0.83 & 0.74 & 9.72
& \second{4.33} & 7.12 & 8.32
& -0.09 & -0.29 & 22.23
& 0.06 & -0.29 & 48.62
& -1.38 & -0.77 & 63.14 \\

SMA
& 0.26 & 0.22 & 8.43
& 3.56 & 6.72 & 7.82
& 0.35 & 0.23 & 23.82
& 1.11 & 1.13 & 26.43
& -1.16 & -0.74 & \second{35.03} \\

\midrule
\multicolumn{16}{c}{\textit{Single-modal Models}} \\
\midrule
Transformer
& 1.76 & 1.23 & 11.83
& 1.76 & 3.09 & 6.92
& 0.51 & 0.61 & \second{9.04}
& 0.60 & 0.41 & 39.73
& -0.99 & -0.66 & 60.92 \\

PPO
& -1.48 & -0.79 & 19.62
& 1.02 & 1.66 & \second{6.32}
& -0.38 & -0.49 & 12.93
& -0.20 & -0.38 & 41.82
& -0.79 & -0.44 & 36.32 \\

Kronos
& 1.91 & 2.12 & 7.72
& 2.22 & 2.65 & 9.83
& 0.35 & 0.29 & 18.92
& \second{1.62} & \second{2.67} & \second{16.73}
& -0.54 & -0.54 & 48.72 \\

\midrule
\multicolumn{16}{c}{\textit{Multimodal General LLMs}} \\
\midrule
Llama4-Scout-17B
& -0.70 & -0.43 & 15.42
& 4.27 & \second{7.14} & 7.12
& -0.76 & -0.70 & 27.63
& -0.69 & -0.64 & 36.04
& -1.10 & -0.73 & 56.02 \\

\midrule
\multicolumn{16}{c}{\textit{Multimodal Financial LLMs}} \\
\midrule
FinAgent
& 0.64 & 0.49 & 10.92
& 4.04 & 6.58 & 8.32
& \second{0.93} & \second{0.87} & 14.43
& 0.53 & 0.08 & 45.42
& -1.31 & -0.57 & 50.73 \\

TradingAgents
& -0.49 & -0.55 & 7.92
& 3.46 & 6.96 & \best{3.42}
& -0.71 & -0.83 & 21.93
& -0.23 & -0.43 & 39.42
& -1.61 & -0.92 & 61.83 \\

DeepFund
& -1.05 & -0.73 & 14.23
& 1.56 & 1.56 & 11.32
& -0.10 & -0.22 & 24.23
& -0.19 & -0.40 & 46.32
& \second{-0.10} & \second{-0.14} & 41.83 \\

VTA
& \second{2.74} & \second{3.69} & \best{6.42}
& 3.93 & 6.56 & 6.82
& 0.50 & 0.51 & 16.42
& 1.41 & 1.34 & 33.42
& -0.97 & -0.69 & 57.92 \\

\midrule
\rowcolor{oursblue}
\method{}~(Ours)
& \best{3.34} & \best{4.77} & \second{7.52}
& \best{4.52} & \best{7.52} & \second{6.32}
& \best{4.13} & \best{6.71} & \best{3.92}
& \best{2.87} & \best{3.57} & \best{15.93}
& \best{1.26} & \best{0.94} & \best{17.23} \\

\bottomrule
\end{tabular}
}
\end{table*}

%% file: tables/extended_test_horizons.tex
\begin{table*}[t]
\centering
\caption{Extended horizon performance on four representative assets
from the main experiments: AAPL, GOOG, TSLA, and BTCUSD. The
evaluation period is 2025-04-01 to 2026-03-20, while the original
training and validation datasets remain unchanged. Best results are
\textbf{bolded}, and second-best results are
\underline{underlined}. Rankings are determined using the unrounded
values.}
\label{tab:extended_test_horizon}

\resizebox{\textwidth}{!}{
\begin{tabular}{l *{12}{N}}
\toprule
\multirow{2}{*}{\textbf{Method}}
& \multicolumn{3}{c}{\textbf{AAPL}}
& \multicolumn{3}{c}{\textbf{GOOG}}
& \multicolumn{3}{c}{\textbf{TSLA}}
& \multicolumn{3}{c}{\textbf{BTCUSD}} \\
\cmidrule(lr){2-4}
\cmidrule(lr){5-7}
\cmidrule(lr){8-10}
\cmidrule(lr){11-13}
& \textbf{CR\%$\uparrow$}
& \textbf{ARR\%$\uparrow$}
& \textbf{SR$\uparrow$}
& \textbf{CR\%$\uparrow$}
& \textbf{ARR\%$\uparrow$}
& \textbf{SR$\uparrow$}
& \textbf{CR\%$\uparrow$}
& \textbf{ARR\%$\uparrow$}
& \textbf{SR$\uparrow$}
& \textbf{CR\%$\uparrow$}
& \textbf{ARR\%$\uparrow$}
& \textbf{SR$\uparrow$} \\
\midrule

\multicolumn{13}{c}{\textit{Rule-based Strategies}} \\
\midrule
B\&H
& 11.06 & 11.46 & 0.49
& 86.95 & 90.17 & 2.32
& 36.87 & 38.23 & 0.85
& -16.96 & -17.58 & -0.22 \\

SMA
& -16.48 & -17.09 & -0.90
& 57.60 & 59.74 & 2.10
& -6.82 & -7.07 & 0.01
& 2.27 & 2.35 & 0.21 \\

\midrule
\multicolumn{13}{c}{\textit{Single-modal Models}} \\
\midrule
Transformer
& \second{22.87} & \second{23.71} & \second{0.88}
& 47.94 & 49.72 & 1.89
& \second{59.06} & \second{61.25} & \second{1.37}
& -16.96 & -17.58 & -0.22 \\

PPO
& 0.20 & 0.20 & 0.14
& 35.84 & 37.17 & 1.76
& 24.67 & 29.29 & 1.24
& -28.50 & -29.47 & -0.64 \\

Kronos
& -9.09 & -9.43 & -0.23
& 38.23 & 39.65 & 1.54
& 24.39 & 25.30 & 0.72
& -18.27 & -18.89 & -0.42 \\

\midrule
\multicolumn{13}{c}{\textit{Multimodal General LLMs}} \\
\midrule
Llama4-Scout-17B
& 0.31 & 0.32 & 0.15
& 68.74 & 71.27 & 2.24
& 10.50 & 10.84 & 0.45
& -6.69 & -6.93 & -0.05 \\

\midrule
\multicolumn{13}{c}{\textit{Multimodal Financial LLMs}} \\
\midrule
FinAgent
& 17.61 & 18.19 & 0.80
& \second{89.51} & \second{92.45} & \second{2.37}
& 28.33 & 29.26 & 0.73
& -19.31 & -19.91 & -0.41 \\

TradingAgents
& -13.55 & -14.00 & -0.50
& 52.02 & 53.73 & 2.00
& 46.04 & 47.55 & 1.29
& \second{30.90} & \second{31.86} & \second{1.14} \\

DeepFund
& 14.32 & 14.79 & 0.62
& 47.47 & 49.03 & 1.76
& 32.26 & 33.31 & 0.83
& -39.05 & -40.26 & -1.24 \\

VTA
& -7.28 & -7.71 & -0.15
& 45.99 & 48.70 & 1.85
& 31.24 & 33.08 & 0.82
& -12.28 & -12.81 & -0.16 \\

\midrule
\rowcolor{oursblue}
\method{}~(Ours)
& \best{33.24} & \best{34.47} & \best{1.30}
& \best{96.09} & \best{99.65} & \best{2.69}
& \best{69.05} & \best{71.61} & \best{1.60}
& \best{46.71} & \best{48.30} & \best{1.62} \\

\bottomrule
\end{tabular}
}

\vspace{0.1em}

\resizebox{\textwidth}{!}{
\begin{tabular}{l *{12}{N}}
\toprule
\multirow{2}{*}{\textbf{Method}}
& \multicolumn{3}{c}{\textbf{AAPL}}
& \multicolumn{3}{c}{\textbf{GOOG}}
& \multicolumn{3}{c}{\textbf{TSLA}}
& \multicolumn{3}{c}{\textbf{BTCUSD}} \\
\cmidrule(lr){2-4}
\cmidrule(lr){5-7}
\cmidrule(lr){8-10}
\cmidrule(lr){11-13}
& \textbf{SoR$\uparrow$}
& \textbf{CalR$\uparrow$}
& \textbf{MDD\%$\downarrow$}
& \textbf{SoR$\uparrow$}
& \textbf{CalR$\uparrow$}
& \textbf{MDD\%$\downarrow$}
& \textbf{SoR$\uparrow$}
& \textbf{CalR$\uparrow$}
& \textbf{MDD\%$\downarrow$}
& \textbf{SoR$\uparrow$}
& \textbf{CalR$\uparrow$}
& \textbf{MDD\%$\downarrow$} \\
\midrule

\multicolumn{13}{c}{\textit{Rule-based Strategies}} \\
\midrule
B\&H
& 0.76 & 0.50 & 22.77
& 4.04 & 6.74 & 13.38
& 1.31 & 1.55 & 24.66
& -0.30 & -0.36 & 49.36 \\

SMA
& -1.11 & -0.91 & 18.85
& 3.59 & 6.68 & 8.94
& 0.01 & -0.19 & 36.36
& 0.31 & 0.08 & 29.40 \\

\midrule
\multicolumn{13}{c}{\textit{Single-modal Models}} \\
\midrule
Transformer
& \second{1.41} & 1.04 & 22.77
& 3.15 & 6.08 & 8.18
& \second{2.48} & 2.87 & 21.33
& -0.30 & -0.36 & 49.36 \\

PPO
& 0.21 & 0.01 & 22.77
& 3.34 & 5.22 & \best{7.12}
& 1.64 & \second{3.92} & \second{11.84}
& -0.89 & -0.41 & 49.09 \\

Kronos
& -0.35 & -0.41 & 22.80
& 2.64 & 3.20 & 12.40
& 1.13 & 1.19 & 21.33
& -0.57 & -0.50 & 38.10 \\

\midrule
\multicolumn{13}{c}{\textit{Multimodal General LLMs}} \\
\midrule
Llama4-Scout-17B
& 0.22 & 0.02 & 18.84
& 3.78 & 6.95 & 10.22
& 0.64 & 0.35 & 30.60
& -0.07 & -0.21 & 33.07 \\

\midrule
\multicolumn{13}{c}{\textit{Multimodal Financial LLMs}} \\
\midrule
FinAgent
& 1.21 & \second{1.32} & \second{13.75}
& \second{4.13} & \second{7.29} & 12.68
& 1.13 & 1.18 & 24.85
& -0.57 & -0.48 & 41.72 \\

TradingAgents
& -0.64 & -0.53 & 26.17
& 3.27 & 5.40 & 9.95
& 1.85 & 2.11 & 22.52
& \second{1.52} & \second{1.33} & \second{24.00} \\

DeepFund
& 0.98 & 0.94 & 15.78
& 2.99 & 3.69 & 13.29
& 1.39 & 0.92 & 36.02
& -1.59 & -0.80 & 50.07 \\

VTA
& -0.23 & -0.34 & 22.77
& 3.73 & 5.53 & 8.81
& 1.30 & 1.16 & 28.64
& -0.22 & -0.29 & 43.73 \\

\rowcolor{oursblue}
\midrule
\method{}~(Ours)
& \best{2.52} & \best{3.91} & \best{8.81}
& \best{5.02} & \best{8.25} & \second{8.08}
& \best{3.23} & \best{6.23} & \best{11.50}
& \best{2.72} & \best{2.84} & \best{17.01} \\

\bottomrule
\end{tabular}
}

\end{table*}

%% file: tables/app_ablation_study.tex
\begin{table}[t]
  \centering
  \caption{Different multimodal representation
  constructions on AAPL and BTCUSD. ARR denotes the annual rate
  of return, and SR denotes the Sharpe ratio.}
  \label{tab:ablation_study_2}
  \setlength{\tabcolsep}{4.5pt}
  \renewcommand{\arraystretch}{1.10}
  \small
  \begin{tabular}{lcccc}
    \toprule
    \multirow{2}{*}{\textbf{Method}}
      & \multicolumn{2}{c}{\textbf{AAPL}}
      & \multicolumn{2}{c}{\textbf{BTCUSD}} \\
    \cmidrule(lr){2-3}
    \cmidrule(lr){4-5}
      & \textbf{ARR\%$\uparrow$} & \textbf{SR$\uparrow$}
      & \textbf{ARR\%$\uparrow$} & \textbf{SR$\uparrow$} \\
    \midrule
    Mean Pooling
      & 18.95 & 0.54 & 21.10 & 1.09 \\
    Concatenation
      & 16.41 & 0.90 & 23.22 & 0.91 \\
    \method{}$_{FSC}$
      & 37.46 & 0.77 & \second{60.24} & \second{1.63} \\
    \method{}$_{unsigned}$
      & \second{41.23} & \second{1.20} & 58.42 & 1.59 \\
    \midrule
    \textbf{\method{}}
      & \textbf{67.79} & \textbf{1.47}
      & \textbf{84.64} & \textbf{2.65} \\
    \bottomrule
  \end{tabular}
\end{table}